\newenvironment{customthm}[1]
  {\innercustomthm}
  {\endinnercustomthm}
\newenvironment{customlem}[1]
  {\innercustomlem}
  {\endinnercustomlem}
\newenvironment{customprop}[1]
  {\innerproposition}
  {\endinnerproposition}
\newcommand{\rfig}[1]{Fig.~\ref{#1}}
\newcommand{\rlem}[1]{Lemma~\ref{#1}}
\newcommand{\rthm}[1]{Theorem~\ref{#1}}
\newcommand{\req}[1]{Equation~(\ref{#1})}
\newcommand{\sltwoz}{SL$(2,\mathbb{Z})$\xspace}
\newcommand{\vect}[1]{\mathbf{#1}}
\newcommand{\sd}{\stackbin[]{*}{\Rightarrow}}
\tikzset{elliptic state/.style={draw,ellipse}}
\newcommand{\keywords}[1]{\par\addvspace\baselineskip
\noindent\keywordname\enspace\ignorespaces#1}
\title{Matrix Semigroup Freeness Problems in \sltwoz\thanks{This research was supported by EPSRC grant EP/M00077X/1.}}
\author{Sang-Ki Ko \and Igor Potapov}
\institute{Department of Computer Science, University of Liverpool\\
Ashton Street, Liverpool, L69 3BX, United Kingdom\\
\mailsa
}
\begin{document}

\maketitle

\begin{abstract}
In this paper we study decidability and complexity of decision problems on matrices from the special linear group \sltwoz. In particular, we study the freeness problem: given a finite set of matrices $G$ generating a multiplicative semigroup $S$, decide whether each element of $S$ has at most one factorization over $G$. In other words, is $G$ a code? We show that the problem of deciding 
whether a matrix semigroup in \sltwoz is non-free is NP-hard.
Then, we study questions about the number of factorizations of matrices in the matrix semigroup such as the 
finite freeness problem, the recurrent matrix problem, the unique factorizability problem, etc.
Finally, we show that some factorization problems could be even harder in \sltwoz, for example we show 
that to decide whether every prime matrix has at most $k$ factorizations is PSPACE-hard.
%
\keywords{matrix semigroups, freeness, decision problems, decidability, computational complexity}
 \end{abstract}

\section{Introduction}

In general, many computational problems for matrix semigroups are proven to be undecidable starting from 
dimension three or four
\cite{BellP08,Identity,CassaigneHK99,HalavaHH07,Potapov2004}. 
One of the central decision problems for matrix semigroups is the membership problem.
Let $S = \langle G \rangle$ be a matrix semigroup generated by a generating set~$G$. 
The {\em membership problem} is to decide whether or not a given matrix $M$ belongs to the matrix semigroup $S$. 
In other words the question is whether a matrix $M$ can be factorized over the generating set~$G$ or not. 

Another fundamental problem for matrix semigroups is the {\em freeness problem},
where we want to know whether every matrix in the matrix semigroup has a unique factorization 
over $G$.
Mandel and Simon~\cite{MandelS77} showed that the freeness problem is decidable in polynomial 
time for matrix semigroups with a single generator for any dimension over rational numbers. 
Indeed, the freeness problem for matrix semigroups with a single generator is the complementary 
problem of the {\em matrix torsion problem} which asks whether there exist two integers $p,q \ge 1$ 
such that $M^p = M^{q+p}$. Klarner et al.~\cite{KlarnerBS91} proved that the freeness problem 
in dimension three over natural numbers is undecidable. 

Decidability of the freeness problem in dimension two has been already an 
open problem for a long time~\cite{BCK2004,CassaigneHK99}. However the solutions for
some special cases exist. For example Charlier and Honkala~\cite{CharlierH14} 
showed that the freeness problem is decidable for 
upper-triangular matrices in dimension two over rationals 
when the products are restricted to certain bounded languages.
Bell and Potapov~\cite{BellP082} showed that the freeness problem is 
undecidable in dimension two for matrices over quaternions.

The study in~\cite{CassaigneHK99} revealed a class of matrix semigroups 
formed by two $2\times2$ matrices over natural numbers for which the 
freeness in unknown, highlighting a particular pair:
\[
\begin{pmatrix} 2 & 0 \\ 0 & 3 \end{pmatrix} \mbox{ and }\begin{pmatrix} 3 & 5 \\ 0 & 5 \end{pmatrix}.
\]

The above case was simultaneously shown to be non-free in two papers~\cite{CassaigneN12} and~\cite{GawrychowskiGK10}, where authors were providing new algorithms for checking freeness at some subclasses.
However the status of the freeness problem for natural, integer and complex numbers is still unknown.
The decidability of the freeness problem for \sltwoz was shown in~\cite{CassaigneN12} following the idea of 
solving the membership problem in \sltwoz shown in~\cite{ChoffrutK10}.

The effective symbolic representation of matrices in \sltwoz leads recently to several 
decidability and complexity results. The mortality, identity and vector reachability problems were shown to be NP-hard for \sltwoz in~\cite{BellHP12,BellP12}. For the modular group, the membership problem was shown to be decidable in polynomial time by Gurevich and Schupp~\cite{GurevichS07}.
Decidability of the membership problem in matrix semigroups in \sltwoz and the {\em identity problem}
 in $\mathbb{Z}^{2\times 2}$ was shown to be decidable in \cite{ChoffrutK10} in 2005.
Later in 2016, Semukhin and Potapov showed that the {\em vector reachability problem} is 
also decidable in \sltwoz~\cite{PotapovS16}. 

In this paper we study decidability and complexity questions related to freeness and various other 
factorization problems in \sltwoz . The new hardness results are interesting in the context 
of understanding complexity in matrix semigroups in general and the decidability results on factorizations in \sltwoz
can be important in other areas and fields. In particular, the special linear group~\sltwoz 
has been extensively exploited in
hyperbolic geometry~\cite{ElstrodtGM88,Zagier08}, dynamical systems~\cite{POLTEROVICHR04}, Lorenz/modular knots~\cite{Mackenzie09}, braid groups~\cite{Potapov13}, high energy physics~\cite{Witten05}, M/en theories~\cite{MoralMPR11}, music theory~\cite{Noll07}, and so on.

In this paper, we show that the question about non-freeness for matrix semigroups in \sltwoz 
is NP-hard by finding a different reduction than the one used in \cite{BellHP12,BellP12}.
Then we show both decidability and hardness results for the {\em finite freeness problem}: decide whether or not every matrix in the matrix semigroup has a finite number of factorizations. Also we prove NP-hardness of the problem whether 
a given matrix has more than one factorization in \sltwoz and undecidability of this problem in $\mathbb{Z}^{4\times 4}$, or more specifically in SL($4,\mathbb{Z}$). Then it is shown that both problems whether a particular matrix has an infinite number factorizations 
or it has more than $k$ factorizations are decidable and NP-hard in \sltwoz while they are undecidable in $\mathbb{Z}^{4\times 4}$.
Finally we show that some of the factorizations problems could be even harder in \sltwoz, for example 
we show that to decide whether every prime matrix has at most $k$-factorizations is PSPACE-hard.

\section{Preliminaries}

In this section we formulate several problems, provide important definitions and notation as well
as several technical lemmas used throughout the paper.\\


\noindent{\bf Basic definitions.}
A {\em semigroup} is a set equipped with an associative binary operation. Let $S$ 
be a semigroup and $X$ be a subset of $S$. We say that a semigroup $S$ is {\em generated} 
by a subset $X$ of $S$ if each element of $S$ can be expressed as a composition of 
elements of $X$. Then, we call $X$ the {\em generating set} of $S$.
 Then, $X$ is a {\em code} if and only 
if every element of $S$ has a unique factorization over $X$. A semigroup~$S$ is {\em free}
if there exists a subset~$X \subseteq S$ which is a code and $S = X^+$.

Given an alphabet $\Sigma = \{1,2, \ldots, m\}$, a word $w$ is an element of $\Sigma^*$. 
For a letter $a \in \Sigma$, we denote by $\overline{a}$ the inverse letter of $a$ 
such that $a \overline{a} = \varepsilon$ where $\varepsilon$ is the empty word.

A {\em nondeterministic finite automaton} (NFA) is a tuple
$A = (\Sigma, Q, \delta, q_0, F)$ where $\Sigma$ is
the input alphabet, $Q$ is the finite set of states,
$\delta \colon Q \times \Sigma \rightarrow 2^Q$ is
the multivalued transition function, $q_0 \in Q$ is the
initial state and $F \subseteq Q$ is the set of final states.
In the usual way $\delta$ is extended as a function
$Q \times \Sigma^* \rightarrow 2^Q$ and the language accepted
by $A$ is $L(A) = \{ w \in \Sigma^* \mid \delta(q_0, w) \cap F
\neq \emptyset \}$. The automaton $A$ is a  deterministic finite
automaton (DFA) if $\delta$ is a single valued  function $Q \times \Sigma \rightarrow Q$.
It is well known that the deterministic and nondeterministic
finite automata recognize the class of {\em regular languages\/}
\cite{Shallit08}.\\

\noindent{\bf Factorization and freeness problems.}
Let $S$ be a matrix semigroup generated by a finite set~$G$ of matrices. Then we 
define a matrix~$M$ is {\em $k$-factorizable} for $k \in \mathbb{N}$ if there 
are at most $k$ different factorizations of $M$ over $G$. In the matrix semigroup freeness 
problem, we check whether every matrix in $S$ is $1$-factorizable.

\begin{problem}
Given a finite set~$G$ of $n \times n$ matrices generating a matrix semigroup~$S$, 
is $S$ free? (i.e., does every element $M \in S$ have a unique factorization over $G$?)
\end{problem}

The above problem is well-known as the {\em freeness problem}. Clearly, the 
{\em non-freeness problem} is to decide whether the matrix semigroup $S$ 
is not free.

For a matrix~$M$, if there exists $k < \infty$ where $M$ is $k$-factorizable, 
then we say that $M$ is {\em finitely factorizable}. In other words, a finitely 
factorizable matrix~$M$ has finitely many different factorizations over~$G$.
We define a matrix semigroup~$S$ is {\em finitely free} if every matrix in 
$S$ is finitely factorizable and define the {\em finite freeness problem} as follows:

\begin{problem}
Given a finite set~$G$ of $n \times n$ matrices generating a matrix semigroup~$S$, 
does every element $M \in S$ have a finite number of factorizations over~$G$?
\end{problem}

Freeness and finite freeness problems are asking about factorization properties 
for all matrices in the semigroup. In case where a semigroup is 
not free or not finitely free, instead of asking whether the semigroup 
is free or finitely free, it is possible to
define problems for a given particular matrix in the semigroup as follows: 

\begin{problem}
Given a finite set~$G$ of $n \times n$ matrices generating a matrix semigroup~$S$ 
and a matrix~$M$ in $S$, does $M$ have
\begin{enumerate}
\item a unique factorization over $G$? (matrix unique factorizability problem)
\item at most $k$ factorizations over $G$? (matrix $k$-factorizability problem)
\item an infinite number of factorizations over $G$? (recurrent matrix problem)
\end{enumerate}
\end{problem}

\usetikzlibrary{arrows,calc}
\tikzset{
>=stealth',
help lines/.style={dashed, thick},
axis/.style={<->},
important line/.style={thick},
connection/.style={thick, dotted},
}

\noindent{\bf Group alphabet encodings.}
Let us introduce several technical lemmas that will be used 
in encodings for NP-hardness and undecidability results. Our original encodings require the use of 
group alphabet and the following lemmas for showing the transformation from an arbitrary
group alphabet into a binary group alphabet and later into matrix form that 
is computable in polynomial time.
%
%
\begin{lemma}\label{lem:binaryencoding}
Let $\Sigma = \{ z_1, z_2, \ldots, z_l \}$ be a group alphabet and $\Sigma_2 = \{ a,b, \overline{a}, \overline{b} \}$ be a binary group alphabet. Define the mapping $\alpha : \Sigma \to \Sigma_2^* $ 
by:
\[
\alpha(z_i) = a^i b \overline{a}^i,\;\; \alpha(\overline{z_i}) = a^i \overline{b} \overline{a}^i,
\]
where $1 \le i \le l$. Then $\alpha$ is a monomorphism. Note that $\alpha$ can be extended to domain 
$\Sigma^*$ in the usual way.
\end{lemma}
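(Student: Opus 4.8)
The plan is to verify that $\alpha$ is a group homomorphism and then that it is injective; since for any group homomorphism injectivity is equivalent to having trivial kernel, the whole task reduces to showing that no nonempty reduced word over $\Sigma \cup \overline{\Sigma}$ is mapped to the empty word. First I would observe that the definition is already compatible with the group structure: since $\alpha(\overline{z_i}) = a^i \overline{b}\, \overline{a}^i = \overline{a^i b \overline{a}^i} = \overline{\alpha(z_i)}$, the assignment respects inversion, so it extends uniquely to a homomorphism $F(\Sigma) \to F(\Sigma_2)$. Each generator image has the uniform block shape $\alpha(y) = a^{i} c\, \overline{a}^{i}$ with $c \in \{b, \overline{b}\}$ and $i \ge 1$, where the exponent $i$ is the index of the letter $y$ and the middle symbol $c$ records whether $y$ is $z_i$ (then $c = b$) or $\overline{z_i}$ (then $c = \overline{b}$).

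The core of the argument is a cancellation analysis. Take a nonempty reduced word $w = y_1 y_2 \cdots y_n$, so that $y_{k+1} \ne \overline{y_k}$ for every $k$, and write $\alpha(w) = a^{i_1} c_1 \overline{a}^{i_1} a^{i_2} c_2 \overline{a}^{i_2} \cdots a^{i_n} c_n \overline{a}^{i_n}$. The only free reduction possible inside this expression is cancellation within the $a$-blocks at the junctions $\overline{a}^{i_k} a^{i_{k+1}}$, together with cancellation among the middle symbols $c_k$. The key point is that a symbol from $\{a, \overline{a}\}$ and a symbol from $\{b, \overline{b}\}$ never cancel, so a given $c_k$ can be deleted only once it has become directly adjacent to some $c_{k\pm 1}$ carrying the opposite letter; and two consecutive middle symbols $c_k, c_{k+1}$ can become adjacent only if the entire intervening block $\overline{a}^{i_k} a^{i_{k+1}}$ reduces to the empty word, i.e. only if $i_k = i_{k+1}$. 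But if $i_k = i_{k+1}$ and $c_k = \overline{c_{k+1}}$, then $y_k$ and $y_{k+1}$ have the same index and opposite middle letters, which means $y_{k+1} = \overline{y_k}$ --- contradicting the assumption that $w$ is reduced. Hence no middle symbol $c_k$ is ever cancelled.

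I would then conclude as follows. Since none of the $n$ middle symbols cancel, the fully reduced form of $\alpha(w)$ still contains $c_1, \ldots, c_n$ and is therefore nonempty whenever $w$ is. This shows $\ker \alpha$ is trivial, so $\alpha$ is injective and hence a monomorphism, and the extension to $\Sigma^*$ is the canonical one. In fact the same bookkeeping yields an explicit left inverse on the image: reading the reduced word $\alpha(w) = a^{k_0} c_1 a^{k_1} c_2 \cdots c_n a^{k_n}$ from left to right, the signs of the $c_k$ recover each letter's polarity, while the indices are recovered from the partial sums of the $a$-exponents via $i_1 = k_0$ and $i_{k+1} = i_k + k_k$, reconstructing $w$ uniquely. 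The main obstacle, and the step needing the most care, is precisely controlling the order of cancellations --- ruling out that a cascade of vanishing $a$-blocks could eventually bring two opposite middle symbols together; this is exactly what the reducedness of $w$ forbids, and the uniform block shape of the images is what keeps the argument clean.
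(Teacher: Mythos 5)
The paper states this lemma without proof (it is the standard embedding of a free group of arbitrary rank into the free group of rank two, used as a known encoding), so there is no in-paper argument to compare against; your proof is correct and is exactly the standard cancellation argument one would supply. The one place to tighten the wording: to justify that a middle symbol $c_k$ can only cancel against an \emph{adjacent} $c_{k\pm1}$, consider the first pair of $b$-type symbols to cancel in any reduction sequence --- at that moment nothing but $a$-letters lies between them, so they must be consecutive and the intervening block $\overline{a}^{i_k}a^{i_{k+1}}$ must have vanished, forcing $i_k=i_{k+1}$ and hence $y_{k+1}=\overline{y_k}$, contradicting reducedness. With that remark your argument is complete, and the explicit reconstruction of $w$ from the reduced image is a nice addition.
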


\begin{lemma}[Lyndon and Schupp~\cite{LyndonS77}]\label{lem:matrixencoding}
Let $\Sigma_2 = \{ a,b, \overline{a}, \overline{b} \}$ be a binary group alphabet and define $f : \Sigma_2^* 
\to \mathbb{Z}^{2\times 2}$ by:
\[
f(a) = \begin{pmatrix}1 & 2 \\ 0 & 1 \end{pmatrix}, f(\overline{a}) = \begin{pmatrix}1 & -2 \\ 0 & 1 \end{pmatrix}, f(b) = \begin{pmatrix}1 & 0 \\ 2 &1 \end{pmatrix}, f(\overline{b}) = \begin{pmatrix}1 & 0 \\ -2 & 1 \end{pmatrix}.
\]
Then $f$ is a monomorphism.
\end{lemma}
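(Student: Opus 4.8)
The plan is to prove that $f$ is a monomorphism by showing that its images $A := f(a) = \begin{pmatrix}1&2\\0&1\end{pmatrix}$ and $B := f(b) = \begin{pmatrix}1&0\\2&1\end{pmatrix}$ generate a free group of rank two inside $SL(2,\mathbb{Z})$. Since $\Sigma_2$ is a group alphabet, $\Sigma_2^*$ is to be read as the set of reduced words, i.e.\ the free group $F_2 = \langle a,b\rangle$, and ``$f$ is a monomorphism'' means that $f$ is an injective group homomorphism. The standard tool for this is the ping-pong (table-tennis) lemma applied to the natural action of $\langle A,B\rangle$ on column vectors in $\mathbb{R}^2$.

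First I would check that $f$ is a well-defined homomorphism. Each of the four matrices has determinant $1$, so all images lie in $SL(2,\mathbb{Z})$, and a one-line computation gives $f(a)f(\overline{a}) = f(\overline{a})f(a) = I$ and $f(b)f(\overline{b}) = f(\overline{b})f(b) = I$, so that $f(\overline{a}) = A^{-1}$ and $f(\overline{b}) = B^{-1}$. Hence $f$ respects the cancellation relations of the group alphabet and extends to a homomorphism on $F_2$. It then remains to prove injectivity, which is equivalent to showing that no nonempty reduced word over $\{a,b,\overline{a},\overline{b}\}$ maps to the identity matrix $I$. Noting that $A^n = \begin{pmatrix}1&2n\\0&1\end{pmatrix}$ and $B^n = \begin{pmatrix}1&0\\2n&1\end{pmatrix}$ are pairwise distinct, both $\langle A\rangle$ and $\langle B\rangle$ are infinite cyclic.

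For injectivity I would set up the ping-pong sets $X = \{(x,y)^T : |x| > |y|\}$ and $Y = \{(x,y)^T : |y| > |x|\}$, which are disjoint. The key estimates to establish are $A^n Y \subseteq X$ and $B^n X \subseteq Y$ for every nonzero integer $n$. For $(x,y)^T \in Y$ we have $A^n(x,y)^T = (x+2ny,\,y)^T$ and $|x+2ny| \geq 2|n|\,|y| - |x| \geq 2|y| - |x| > |y|$, placing the result in $X$; symmetrically, for $(x,y)^T \in X$ we have $B^n(x,y)^T = (x,\,2nx+y)^T$ with $|2nx+y| \geq 2|n|\,|x| - |y| \geq 2|x| - |y| > |x|$, placing the result in $Y$. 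With these inequalities verified, the ping-pong lemma (in its free-product form, valid because each cyclic factor has order at least three) yields $\langle A,B\rangle = \langle A\rangle * \langle B\rangle \cong \mathbb{Z} * \mathbb{Z} = F_2$, so every nonempty reduced word represents a nontrivial element and $f$ is injective.

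The main obstacle is not the algebra, which is entirely elementary, but the bookkeeping inside the ping-pong argument. Invoking the clean free-product statement hides the case analysis for reduced words that begin and end with a power of the same generator; in a fully self-contained version I would handle such words by evaluating them on a seed vector already lying in the appropriate one of $X,Y$ and, where necessary, conjugating by a suitable power of $A$ or $B$ to bring the word into the alternating normal form $A^{n_1}B^{m_1}A^{n_2}\cdots$ to which the inclusions apply directly, thereby forcing the image to move a point of $Y$ into $X$ (or vice versa) and hence to differ from $I$. I expect this step, together with citing the ping-pong lemma as found in Lyndon and Schupp~\cite{LyndonS77}, to complete the argument.
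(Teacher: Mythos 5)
The paper itself gives no proof of this lemma: it is imported verbatim from Lyndon and Schupp as a black-box citation (it is the classical fact, due to Sanov, that $\left(\begin{smallmatrix}1&2\\0&1\end{smallmatrix}\right)$ and $\left(\begin{smallmatrix}1&0\\2&1\end{smallmatrix}\right)$ generate a free group of rank two), so there is no in-paper argument to compare against. Your proposal reconstructs exactly the standard proof of that fact: the ping-pong lemma applied to the sets $X=\{|x|>|y|\}$ and $Y=\{|y|>|x|\}$, with the estimates $|x+2ny|\ge 2|n|\,|y|-|x|>|y|$ and its mirror image, which are correct and which do yield $\langle A,B\rangle\cong\mathbb{Z}*\mathbb{Z}$. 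You also correctly flag the one point that needs care in reading the statement: as a morphism of the free monoid $\Sigma_2^*$ the map $f$ cannot literally be injective (since $f(a\overline{a})=I=f(\varepsilon)$), so ``monomorphism'' must be understood as injectivity of the induced map on the free group, i.e.\ no nonempty reduced word maps to $I$; that is the reading the paper relies on when it reduces word problems to matrix identities. Your closing remark about words that begin and end in powers of the same generator is the right caveat, and the conjugation trick you sketch (or, equivalently, invoking the free-product form of ping-pong, which is licensed here because both cyclic factors are infinite) closes it. In short: the proposal is correct and supplies the standard proof that the paper omits.
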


The composition of Lemmas~\ref{lem:binaryencoding} and \ref{lem:matrixencoding} gives 
us the following lemma that ensures that encoding the {\em subset sum problem} (SSP) and 
the {\em equal subset sum problem} (ESSP) instances into matrix semigroups can be completed 
in polynomial time.

\begin{lemma}[Bell and Potapov~\cite{BellP12}]
Let $z_j$ be in $\Sigma$ and $\alpha,f$ be mappings as defined in~Lemmas~\ref{lem:binaryencoding} and \ref{lem:matrixencoding}, 
then, for any $i \in \mathbb{N}$,
\[
f(\alpha(z_j^i)) = f((a^j b \overline{a}^j)^i) = \begin{pmatrix} 1+ 4ij & -8ij^2 \\ 
2i & 1 - 4ij \end{pmatrix}.
\]
\end{lemma}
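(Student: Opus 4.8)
The plan is to reduce the entire statement to a single $2\times 2$ matrix power by exploiting that both $\alpha$ and $f$ are homomorphisms, as established in Lemmas~\ref{lem:binaryencoding} and~\ref{lem:matrixencoding}. First I would write $\alpha(z_j^i) = (\alpha(z_j))^i = (a^j b \overline{a}^j)^i$ using multiplicativity of $\alpha$, and then apply $f$ to obtain
\[
f(\alpha(z_j^i)) = \bigl(f(a)^j\, f(b)\, f(\overline{a})^j\bigr)^i .
\]
This cleanly separates the task into computing a base matrix $M_j := f(a)^j f(b) f(\overline{a})^j$ and then raising it to the $i$-th power.

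For the base matrix, I would use that $f(a)$ and $f(\overline{a})$ are unipotent upper-triangular, so their powers are immediate: $f(a)^j = \left(\begin{smallmatrix} 1 & 2j \\ 0 & 1 \end{smallmatrix}\right)$ and $f(\overline{a})^j = \left(\begin{smallmatrix} 1 & -2j \\ 0 & 1 \end{smallmatrix}\right)$. Carrying out the product $f(a)^j f(b) f(\overline{a})^j$ (in this order, respecting that $f$ is a homomorphism and not an anti-homomorphism) then gives the explicit base matrix $M_j = \left(\begin{smallmatrix} 1+4j & -8j^2 \\ 2 & 1-4j \end{smallmatrix}\right)$.

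The key observation for the power is that $M_j = I + N$ where $N = \left(\begin{smallmatrix} 4j & -8j^2 \\ 2 & -4j \end{smallmatrix}\right)$ is nilpotent with $N^2 = 0$; this follows instantly since $N$ has vanishing trace and vanishing determinant, so its characteristic polynomial is $\lambda^2$ and Cayley--Hamilton forces $N^2 = 0$. Consequently the binomial expansion collapses to $M_j^i = (I+N)^i = I + iN$, which is exactly $\left(\begin{smallmatrix} 1+4ij & -8ij^2 \\ 2i & 1-4ij \end{smallmatrix}\right)$, completing the computation.

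In truth this is a direct calculation rather than a statement with a conceptual obstacle, so I do not expect a genuine hard step; the only realistic pitfall is arithmetic bookkeeping in the $2\times 2$ products and the ordering of the factors. If one prefers to avoid invoking nilpotency at all, the identical result follows by a one-line induction on $i$, whose inductive step $M_j^{i} = M_j^{\,i-1} M_j$ is a single further $2\times 2$ multiplication.
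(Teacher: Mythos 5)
Your computation is correct: the base matrix $f(a)^j f(b) f(\overline{a})^j = \left(\begin{smallmatrix} 1+4j & -8j^2 \\ 2 & 1-4j \end{smallmatrix}\right)$ checks out, and the trace-zero, determinant-zero observation legitimately gives $N^2=0$ and hence $(I+N)^i = I+iN$, which is the claimed matrix. Note that the paper itself offers no proof of this lemma --- it is imported verbatim from Bell and Potapov --- so there is no in-paper argument to compare against; your verification stands on its own. A marginally slicker route, which avoids both nilpotency and induction, is to cancel $\overline{a}^j a^j$ in the free group first, so that $(a^j b \overline{a}^j)^i = a^j b^i \overline{a}^j$ and therefore $f(\alpha(z_j^i)) = f(a)^j f(b)^i f(a)^{-j}$, a conjugate of the single unipotent matrix $f(b)^i = \left(\begin{smallmatrix} 1 & 0 \\ 2i & 1 \end{smallmatrix}\right)$; one multiplication then yields the stated result. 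Either way the lemma is a direct calculation, and your write-up contains no gap.
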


\noindent{\bf  Symbolic representation of matrices from \sltwoz .}
Here we provide another technical details about the representation of \sltwoz and their 
properties~\cite{BellHP16,Rankin77}.
It is known that \sltwoz is generated by two matrices 
\[\vect{S} = \begin{pmatrix} 0 & -1 \\ 1 & 0 \end{pmatrix} \mbox{ and }
\vect{R} = \begin{pmatrix} 0 & -1 \\ 1 & 1 \end{pmatrix},\]
which have respective orders 4 and 6. This implies that every matrix in \sltwoz is a product of 
$\vect{S}$ and $\vect{R}$. Since $\vect{S}^2 = \vect{R}^3 = -\vect{I}$, every matrix in \sltwoz 
can be uniquely brought to the following form:
\begin{equation}\label{eq:unique}
(-\vect{I})^{i_0} \vect{R}^{i_1} \vect{S} \vect{R}^{i_2} \vect{S} \cdots \vect{S} \vect{R}^{i_{n-1}} \vect{S} \vect{R}^{i_n},
\end{equation}
where $i_0 \in \{0,1\}$, $i_1,i_n \in \{0,1,2\}$, and $i_j \ne 0 \mod 3$ for $1 < j < n$.

The representation (\ref{eq:unique}) for a given matrix in  \sltwoz  is unique, but it is very common
to present this result ignoring the sign, i.e. considering the projective special linear group.
%
%
Let $\Sigma_{SR} = \{ s,r\}$ be a binary alphabet. We define a mapping~$\varphi: \Sigma_{SR} \to 
{\rm SL}(2, \mathbb{Z})$ as follows: $\varphi(s) = \vect{S}$ 
and $\varphi(r) = \vect{R}$.
Naturally, we can 
extend the mapping $\varphi$ to the morphism~$\varphi : \Sigma_{SR}^* \to {\rm SL}(2, \mathbb{Z})$.
We call a word~$w \in \Sigma_{SR}^*$ {\em reduced} if there is no occurrence of subwords~$ss$ 
or $rrr$ in $w$. Then, we have the following fact.

\begin{theorem}[Lyndon and Schupp~\cite{LyndonS77}]\label{thm:Lyndon}
For every matrix~$M \in {\rm SL}(2,\mathbb{Z})$, there exists a unique reduced word~$w \in \Sigma_{SR}^*$ 
in form of (\ref{eq:unique})
such that either $M = \varphi(w)$ or $M = - \varphi(w)$.
\end{theorem}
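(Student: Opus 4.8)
The plan is to split the statement into an \emph{existence} part (every $M$ can be written in the reduced form) and a \emph{uniqueness} part (the reduced word and the sign are determined by $M$), and to treat uniqueness as the real content. Throughout I will use the relations $\vect{S}^2 = \vect{R}^3 = -\vect{I}$, the fact that $-\vect{I}$ is central, and the morphism $\bar{\varphi}$ induced by $\varphi$ on the projective group $PSL(2,\mathbb{Z})$.

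For existence, I would first show that $\vect{S}$ and $\vect{R}$ generate $SL(2,\mathbb{Z})$. Setting $\vect{T} = \begin{pmatrix}1&1\\0&1\end{pmatrix}$ one checks $\vect{S}\vect{T} = \vect{R}$, so $\langle \vect{S},\vect{R}\rangle = \langle\vect{S},\vect{T}\rangle$ both as groups, since $\vect{S},\vect{R}$ have finite order and hence their inverses are positive powers. A standard Euclidean-algorithm reduction on the first column of an arbitrary $M=\begin{pmatrix}a&b\\c&d\end{pmatrix}\in SL(2,\mathbb{Z})$ — using powers of $\vect{T}$ to subtract multiples of one entry from the other and $\vect{S}$ to swap rows and flip signs — drives $M$ to $\pm\vect{I}$. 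Reading the reduction backwards expresses $M$ as $\pm$ a product of $\vect{S}$ and $\vect{R}$, i.e. $M=\pm\varphi(u)$ for some $u\in\Sigma_{SR}^*$.

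To turn $u$ into the reduced form (\ref{eq:unique}) I would rewrite using the relations: $\vect{S}^{-1}=-\vect{S}$ and $\vect{R}^{-1}=-\vect{R}^2$ eliminate inverses, centrality of $-\vect{I}$ collects every sign at the front as $(-\vect{I})^{i_0}$, and each occurrence of $ss$ or $rrr$ is replaced by $-\vect{I}$. Every such step strictly shortens the word, so the process terminates in a word with no $ss$ and no $rrr$ whose interior $\vect{R}$-blocks have exponent in $\{1,2\}$; this is exactly the shape of (\ref{eq:unique}), and absorbing $(-\vect{I})^{i_0}$ into the sign gives $M=\pm\varphi(w)$ with $w$ reduced.

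The crux is uniqueness, which is equivalent to the statement that $\bar{\varphi}$ is a bijection from reduced words onto $PSL(2,\mathbb{Z})$, i.e. that $PSL(2,\mathbb{Z})=\langle s\rangle*\langle r\rangle\cong\mathbb{Z}/2*\mathbb{Z}/3$. I would prove that no nonempty reduced word maps to the identity of $PSL(2,\mathbb{Z})$ by a ping-pong (Table-Tennis) argument for the M\"obius action on $\mathbb{R}\cup\{\infty\}$, where $\vect{S}\colon z\mapsto -1/z$ and $\vect{R}\colon z\mapsto -1/(z+1)$. One exhibits disjoint sets $X_{\vect{S}},X_{\vect{R}}$ such that the nontrivial element of $\langle\vect{S}\rangle$ sends $X_{\vect{R}}$ into $X_{\vect{S}}$ and both nontrivial elements $\vect{R},\vect{R}^2$ of $\langle\vect{R}\rangle$ send $X_{\vect{S}}$ into $X_{\vect{R}}$; then an alternating nonempty word moves a base point out of its region, so it cannot act trivially. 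Finally, since $-\vect{I}\neq\vect{I}$, for a fixed $M$ exactly one sign satisfies $M=\pm\varphi(w)$, so both $w$ and the sign are determined. The delicate step — the main obstacle — is choosing the ping-pong sets so that the two distinct nontrivial elements of the order-three factor are both captured by the same region; an equivalent and equally acceptable route is to invoke uniqueness of the continued-fraction (Stern--Brocot) expansion produced by the Euclidean reduction above, which yields the same normal form directly.
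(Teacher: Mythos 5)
The paper does not prove this statement at all: it is imported verbatim from Lyndon and Schupp as a known structural fact about \sltwoz (equivalently, that $PSL(2,\mathbb{Z})\cong\mathbb{Z}/2*\mathbb{Z}/3$), so there is no in-paper argument to compare against. Judged on its own, your sketch is the standard proof and is essentially correct. The existence half is fine: $\vect{S}\vect{T}=\vect{R}$ does hold, $\vect{S}^{-1}=\vect{S}^3$ and $\vect{R}^{-1}=\vect{R}^5$ make the group generated coincide with the semigroup generated, the Euclidean reduction gives generation, and the rewriting via $\vect{S}^2=\vect{R}^3=-\vect{I}$ with $-\vect{I}$ central terminates because each cancellation strictly shortens the word, landing exactly in the shape of (\ref{eq:unique}). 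For the uniqueness half, the ping-pong argument you outline closes as follows: take $X_{\vect{S}}$ to be the positive irrationals and $X_{\vect{R}}$ the negative irrationals; then $\vect{S}\colon z\mapsto -1/z$ maps $X_{\vect{R}}$ into $X_{\vect{S}}$, while $\vect{R}\colon z\mapsto -1/(z+1)$ and $\vect{R}^2\colon z\mapsto -1-1/z$ both map $X_{\vect{S}}$ into $X_{\vect{R}}$, which resolves the ``delicate step'' you flag. Two small points deserve care if you write this out in full: (i) the ping-pong lemma for free products needs the conjugation trick for words beginning and ending in different factors, which requires one factor to have order at least $3$ --- satisfied here since $\langle\vect{R}\rangle$ has order $3$ in $PSL(2,\mathbb{Z})$; (ii) the word $w$ in the theorem lives over $\{s,r\}$ only, so the $(-\vect{I})^{i_0}$ prefix of (\ref{eq:unique}) must be absorbed into the ``$M=\varphi(w)$ or $M=-\varphi(w)$'' alternative, exactly as you do, with the sign then forced because $\vect{I}\neq-\vect{I}$. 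This is also essentially the route taken in the cited source, so your proof is a faithful reconstruction of the omitted argument rather than a divergence from it.
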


\noindent
Following \rthm{thm:Lyndon}, all word representations 
of  a particular matrix~$M$  in \sltwoz
over the alphabet 
$\Sigma_{SR}$  can be expressed as a 
context-free language.

\begin{lemma}\label{lem:context}
Let $M$ be a matrix in \sltwoz. Then, there exists a context-free language over $\Sigma_{SR}$ 
which contains all representations~$w \in \Sigma_{SR}^*$ such that $\varphi(w) = M$.
\end{lemma}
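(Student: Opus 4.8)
The plan is to prove the stronger statement that the full preimage $\varphi^{-1}(M) = \{w \in \Sigma_{SR}^* : \varphi(w) = M\}$ is itself a context-free language; any language equal to this set then witnesses the lemma. The starting point is the algebraic structure of \sltwoz: it is the amalgamated free product $\mathbb{Z}_4 *_{\mathbb{Z}_2} \mathbb{Z}_6$ generated by $\vect{S}$ and $\vect{R}$ (with $\vect{S}^2 = \vect{R}^3$), and in particular it is virtually free. By the Muller--Schupp theorem, a finitely generated group is virtually free if and only if its word problem is context free, so the word problem of \sltwoz over any finite generating set is a context-free language.

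A technical point to settle first is that the alphabet $\Sigma_{SR}$ contains no inverse letters. Since $\vect{S}$ and $\vect{R}$ have finite order, their inverses are realised by positive powers ($\vect{S}^{-1} = \vect{S}^3$, $\vect{R}^{-1} = \vect{R}^5$), so $\varphi$ already maps $\Sigma_{SR}^*$ onto all of \sltwoz. I would take the word problem $W$ over the symmetric generating set and observe that the positive word problem $W_+ = \{w \in \Sigma_{SR}^* : \varphi(w) = \vect{I}\}$ equals $W \cap \Sigma_{SR}^*$; being the intersection of a context-free language with a regular one, $W_+$ is context free.

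To pass from the identity $\vect{I}$ to an arbitrary target $M$, I would fix any word $u \in \Sigma_{SR}^*$ with $\varphi(u) = M^{-1}$ (such a $u$ exists by surjectivity, and \rthm{thm:Lyndon} even supplies a canonical one). Then for every $w$ we have $\varphi(w) = M$ iff $\varphi(wu) = \vect{I}$ iff $wu \in W_+$, so $\varphi^{-1}(M) = W_+ / \{u\}$, the right quotient of $W_+$ by the single string $u$. Context-free languages are closed under right quotient by regular languages, and $\{u\}$ is finite, hence regular; therefore $\varphi^{-1}(M)$ is context free, as required.

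The part needing the most care is the reduction to a clean, self-contained argument, since the Muller--Schupp theorem is a heavy black box. If a more elementary proof is preferred, I would instead build a pushdown automaton that reads $w$ left to right and maintains on its stack the reduced normal form (\ref{eq:unique}) of the prefix read so far, tracking the sign $(-\vect{I})^{i_0}$ in the control state. The delicate step there is verifying that right-multiplication by a single generator $\vect{S}$ or $\vect{R}$ alters the stored normal form only within a bounded window at the top of the stack: cancellations in the amalgamated product do not cascade unboundedly, because merging the amalgamated element $\vect{S}^2 = \vect{R}^3$ into an adjacent $\vect{R}$-syllable leaves its coset unchanged. The automaton then accepts exactly when, at the end of the input, the stack spells the fixed normal form of $M$.
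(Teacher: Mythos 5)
Your argument is correct, and it actually establishes the stronger statement that the exact preimage $\varphi^{-1}(M)$ is context-free; but it takes a genuinely different route from the paper's. The paper does not invoke Muller--Schupp. Instead it writes down an explicit grammar $G_M$: taking the unique reduced word $w_1w_2\cdots w_n$ for $M$ supplied by \rthm{thm:Lyndon}, the start production is $V_S \to A_1 w_1 A_2 w_2 \cdots A_n w_n A_{n+1}$, where the two nonterminals $A^+$ and $A^-$ generate precisely the positive words over $\Sigma_{SR}$ evaluating to $+\vect{I}$ and $-\vect{I}$ respectively (via rules such as $A^+\to \varepsilon \mid sA^-s \mid rA^+rA^+r \mid \cdots$), and the signs among $A_1,\dots,A_{n+1}$ are constrained so that the product evaluates to $M$ rather than $-M$. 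In effect the paper hand-builds the positive word problem of \sltwoz as a small grammar and splices the canonical representative of $M$ into it, with the $\pm$ bookkeeping playing the role of your right quotient by a word $u$ with $\varphi(u)=M^{-1}$; your fallback pushdown automaton maintaining the normal form (\ref{eq:unique}) on the stack is essentially the same underlying mechanism, and your observation that one generator perturbs the normal form only in a bounded window is exactly what the grammar rules encode. What your version buys is brevity and clean closure-theoretic reasoning (intersection with the regular set $\Sigma_{SR}^*$ to dispose of inverse letters, then quotient by a single string, both standard closure properties). What the paper's version buys is self-containedness and an explicitly polynomial-size, effectively constructed grammar, which matters downstream: the lemma is used algorithmically in \rthm{thm:uniquefp} and \rthm{thm:recurrent}, where $L_M$ is intersected with a regular language and tested for cardinality and finiteness. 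If you keep the Muller--Schupp route you should add one sentence noting that the grammar for $\varphi^{-1}(M)$ is effectively computable from $M$ (it is, since the quotient by the fixed string $u$ and the intersection with $\Sigma_{SR}^*$ are both effective constructions on pushdown automata), as the bare existence of a context-free language would not suffice for the decidability applications.
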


\section{Matrix semigroup freeness}

The matrix semigroup freeness problem is to determine whether every matrix in 
the semigroup has a unique factorization. Note that the decidability of the matrix 
semigroup freeness in \sltwoz has been shown by 
Cassaigne and Nicolas~\cite{CassaigneN12} but the complexity of the problem 
has not been resolved yet despite various NP-hardness results on other matrix problems~\cite{BellHP12,BellP12}.
Here we show that the problem of deciding whether the matrix semigroup in \sltwoz is not free is NP-hard by encoding 
different NP-hard problem comparing to the one used in~\cite{BellHP12,BellP12}.

\begin{theorem}
Given a matrix semigroup~$S$ in \sltwoz generated by the set~$G$ of matrices, the problem of deciding whether $S$ is not free is NP-hard.
\end{theorem}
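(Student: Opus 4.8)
The plan is to reduce from a known NP-complete problem about subsets—the natural candidate being the \emph{equal subset sum problem} (ESSP), since the excerpt explicitly prepares the group-alphabet and matrix encoding lemmas (Lemmas~\ref{lem:binaryencoding}, \ref{lem:matrixencoding}) precisely for encoding SSP and ESSP instances into \sltwoz in polynomial time. Recall that in ESSP we are given a finite multiset of positive integers $a_1,\ldots,a_n$ and we ask whether there exist two \emph{distinct} nonempty subsets $I,J \subseteq \{1,\ldots,n\}$ with $\sum_{i \in I} a_i = \sum_{j \in J} a_j$. The key observation is that non-freeness is itself a statement about two distinct factorizations producing the same matrix, so the structural match to ESSP (two distinct index sets producing the same sum) is very direct: an equal-sum pair should translate into two distinct products of generators that collapse to the same matrix, witnessing non-freeness.

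The main steps would proceed as follows. First I would build, for each integer $a_i$, a block of generator matrices whose purpose is to ``record'' whether index $i$ is selected, so that any product over these generators encodes a choice of subset. The encoding lemmas let me take a word over a group alphabet, pass through the binary group alphabet $\{a,b,\overline{a},\overline{b}\}$, and land in $\mathbb{Z}^{2\times2}$ via $f$; by the third lemma the exponent data $\sum a_i$ accumulates into the matrix entries in a controlled way. I would arrange the generators so that the matrix produced by a word depends only on (i) which indices were selected and (ii) the resulting sum, and so that \emph{equal sums force equal matrices}. The second step is to ensure the converse tightness: that the \emph{only} way two distinct factorizations can coincide is through a genuine equal-sum pair, rather than through accidental cancellation in the group. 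Here the fact that $f$ and $\alpha$ are monomorphisms is essential—it guarantees that words over the group alphabet map injectively, so collisions in the matrix semigroup correspond exactly to collisions at the word level, which I can design to mean exactly equal subset sums.

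The third step is to add a ``guard'' or delimiter gadget so that factorizations cannot be split or recombined in unintended places; a standard device is to frame each block with a fixed marker matrix (realized through the $s,r$ generators of \rthm{thm:Lyndon}) that cannot appear in the interior of a legitimate block, forcing any factorization to respect the block boundaries. With the boundaries enforced, a single factorization corresponds to one subset choice, and two distinct factorizations of the same matrix correspond to two distinct subsets with the same sum—so $S$ is non-free if and only if the ESSP instance is a yes-instance. Since the whole construction is polynomial in the input size by the encoding lemmas, NP-hardness follows.

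The hardest part, I expect, is the tightness argument in the second and third steps—ruling out \emph{spurious} coincidences. Matrix semigroups in \sltwoz admit rich relations (recall $\vect{S}^2=\vect{R}^3=-\vect{I}$), so without careful design two unrelated products could accidentally coincide and create a false witness of non-freeness not backed by any equal-sum pair, or conversely the block markers could be shortcut by group relations. Controlling this requires leaning heavily on the unique normal form~(\ref{eq:unique}) and on the injectivity of the encodings to certify that the matrix semigroup faithfully mirrors the combinatorial structure of subsets and their sums; the $\pm\vect{I}$ sign ambiguity from \rthm{thm:Lyndon} will need to be neutralized (for instance by fixing parity through the guard gadget) so that it cannot itself manufacture a second factorization.
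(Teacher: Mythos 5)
Your proposal follows essentially the same route as the paper: a polynomial-time reduction from ESSP in which each integer $s_i$ yields a choice between an $a^{s_i}$-block and an empty block, delimiters force factorizations to respect block boundaries, and the monomorphisms $\alpha$ and $f$ transfer word-level collisions faithfully into \sltwoz, with the converse direction handled by decomposing a colliding pair into maximal delimiter-respecting segments. The only cosmetic difference is that the paper realizes the delimiters as extra ``border letters'' of the group alphabet (cancelled pairwise by their inverses) rather than as $\vect{S},\vect{R}$-markers, which also makes your worry about the $\pm\vect{I}$ ambiguity moot since $f$ embeds a free group.
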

\begin{proof}
We use an encoding of the {\em equal subset sum problem} (ESSP), which is proven to be NP-hard, 
into a set of two-dimensional 
integral matrices~\cite{WoegingerY92}. The ESSP is, given a set~$U = \{ s_1, s_2, \ldots, s_k \}$ of 
$k$ integers, to decide whether or not there exist two disjoint nonempty subsets 
$U_1, U_2 \subseteq U$ whose elements sum up to the same value. Namely,
$\sum_{s_1 \in U_1} s_1 = \sum_{s_2 \in U_2} s_2.$

Define an alphabet
$\Sigma = \{0,1,  \ldots,  k-1,  \overline{1},\overline{2}, \ldots, \overline{(k-2)}, \overline{(k-1)}, \overline{k}, 
a\}.$
We define a set~$W$ of words which encodes the ESSP instance.
\[W = \{ i \cdot a^{i+1} \cdot \overline{(i+1)},\;\; i \cdot \varepsilon \cdot \overline{(i+1)}  \mid 0 \le i \le k-1 \} \subseteq \Sigma^*.
\]


We define `border letters' as letters from $\Sigma \setminus \{ a \}$ and the inner border letters 
of a word as all border letters excluding the first and last. 
We call a word a `partial cycle' if all inner border 
letters in that word are inverse to a consecutive inner border letter. Moreover, we note that for any partial cycle~$u \in W^+$ 
the first border letter of $u$ is strictly smaller than the last border letter if we compare them as integers.
\rfig{fig:structure} shows the structure of our encoding of the ESSP instance.

First we prove that if there is a solution to the ESSP instance, then the matrix semigroup generated 
by matrices encoded from 
the set~$W$ is not free.
Let us assume that there exists a solution to the ESSP instance, which is two sequences of integers 
where each of two sequences sums up to the same integer~$x$. Then, the solution can be represented 
by the following pair of sequences:
\[
Y = (y_1, y_2, \ldots, y_{k-1}, y_k) \mbox{  and  }Z = (z_1, z_2, \ldots, z_{k-1}, z_k),
\]
where $y_i, z_i \in \{ 0, s_i \}, 1 \le i \le k$ and $\sum_{i=1}^k y_i = \sum_{i=1}^k z_i = x$. Note that 
$y_i \ne z_i$ in at least one index $i$ for $1 \le i \le k$.

For a sequence~$Y$, there exists a word~$w_Y = w_1 w_2 \cdots w_{k} \in W^+$ such that 
$w_i = (i-1) \cdot a^{y_i} \cdot \overline{i}$.
Since $\sum_{i=1}^k y_i = x$, the reduced representation of~$w_Y$ is $r(w_Y) = 0 \cdot s^x \cdot \overline{k}$ as all inner border letters are cancelled. Analogously, we have a word~$w_Z$ for a sequence~$Z$ and its 
reduced representation~$r(w_Z)$ is equal to $r(w_Y)$ as the sum of integers in the sequence~$Z$ is equal 
to the sum of integers in~$Y$.
As we have two words in $W^+$ whose reduced representations are equal, 
the semigroup generated by matrices encoded from the set~$W$ is not free.

Now we prove the opposite direction: if there is no solution to the ESSP instance, then the matrix semigroup 
is free.
Assume that there is no solution to the ESSP instance and the matrix semigroup is not free. Since the 
matrix semigroup is not free, we have two different words $w, w' \in W^+$ whose reduced representations 
are equal, namely, $r(w) = r(w')$.

\begin{figure}[t]
\centering
\begin{tikzpicture}[->,>=stealth',shorten >=1pt,auto,node distance=1.7cm,
                    semithick]

  \node[elliptic state] (A)                  {0};
  \node[elliptic state]         (B) [right of=A] {1};
  \node[elliptic state]         (C) [right of=B] {2};
    \node[]         (Z) [right of=C] {$\cdots$};
  \node[elliptic state]         (D) [right of=Z] {$k-2$};
    \node[elliptic state]         (E) [right of=D] {$k-1$};
        \node[elliptic state]         (F) [right of=E] {$k$};

  \path (A) edge     [bend left]         node {$a^{s_1}$} (B)
   edge     [bend right]         node {$\varepsilon$} (B)
(B) edge     [bend left]         node {$a^{s_2}$} (C)
   edge     [bend right]         node {$\varepsilon$} (C)
(C) edge     [bend left,dotted]         node {$$} (Z)
   edge     [bend right,dotted]         node {$$} (Z)
(Z) edge     [bend left,dotted]         node {$$} (D)
   edge     [bend right,dotted]         node {$$} (D)
(D) edge     [bend left]         node {$a^{s_{k-1}}$} (E)
   edge     [bend right]         node {$\varepsilon$} (E)
(E) edge     [bend left,above]         node {$a^{s_{k}}$} (F)
   edge     [bend right,above]         node {$\varepsilon$} (F);
\end{tikzpicture}
\caption{Structure of the matrix semigroup encoded by the set~$W$. Each matrix in the generating set 
of the matrix semigroup corresponds to each transition of the automaton structure.}
\label{fig:structure}
\end{figure}
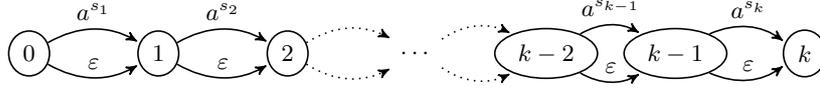

For a word~$w$, we decompose~$w$ into subwords~$w = u_1 u_2 \cdots u_m$ such that each $u_i 
\in W^+, 1 \le i \le m$ is a partial cycle of maximal size. Similarly, we decompose $w'$ into subwords 
of maximal partial cycles as follows: $w' = u_1' u_2' \cdots u_n'$. 
Since $r(w) = r(w')$, it follows that 
$r(u_i) = r(u_i')$ should hold for $1 \le i \le m$ and $m=n$. On the other hand, since $w \ne w'$, there 
exists $i, 1\le i \le m$ where $u_i \ne u_i'$. Note that the maximal partial cycles $u_i$ and $u_i'$ should 
have the same number of $a$'s since $r(u_i) = r(u_i')$ and the letter $a$ cannot be cancelled by the
reduction of words.
As we mentioned earlier, the first border letter and last border letter of a partial cycle are 
integers where the first border letter is strictly smaller than the last border letter. Let us say that 
$i_1$ is the first border letter and $i_2$ is the last border letter of $u_i$ and $u_i'$. Then, 
the number of $a$'s in $u_i$ and $u_i'$ is the sum of subset of integers from the set~$\{ s_{i_1+1}, 
s_{i_1+2}, \ldots, s_{i_2} \}$. It follows from the fact that $u_i \ne u_i'$ that we have two distinct 
subsets of the set~$\{ s_{i_1+1}, s_{i_1+2}, \ldots, s_{i_2} \}$ whose sums are the same. 
This contradicts our assumption since we have two disjoint subsets of equal subset sum.
\qed\end{proof}

Recently, Bell et al. proved that the problem of deciding whether the identity matrix 
is in $S$, where $S$ is an arbitrary regular subset of \sltwoz, is in NP~\cite{BellHP16}. 
Since we can show that the matrix semigroup~$S$ is not free by showing that the 
equation $M_1 M M_2 = M_3 M' M_4$ is satisfied where 
$M_1 \ne M_3$, $M_2 \ne M_4$, and $M_i, M, M' \in S$ for $1 \le i \le 4$. We can 
show that $S$ is not free by showing that the matrix
$M_1 M M_2 M_4^{-1} M'^{-1} M_3^{-1}$ is the identity matrix.

Let $M_1 M^* M_2 M_4^{-1} (M^{-1})^* M_3^{-1}$ be a regular subset of \sltwoz 
subject to $M_1 \ne M_3$, $M_2 \ne M_4$ and $M \in S$. Then, we can decide 
whether or not $S$ is free by deciding whether or not a regular subset of \sltwoz 
contains the identity matrix. Therefore, we can conclude as follows:

\begin{corollary}
Given a matrix semigroup~$S$ in \sltwoz generated by the set~$G$ of matrices, the problem of deciding whether $S$ is not free is NP-complete.
\end{corollary}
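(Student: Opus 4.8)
The plan is to complement the NP-hardness just established with a matching upper bound, so that together they yield NP-completeness. Since the preceding theorem already gives NP-hardness, the only thing left is to place the non-freeness problem inside NP, and the natural route is a polynomial-time reduction to a problem already known to sit in NP: deciding whether the identity matrix lies in a regular subset of \sltwoz, which Bell et al.\ proved is in NP~\cite{BellHP16}.

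First I would characterize non-freeness through an identity-type equation. If $S=\langle G\rangle$ is not free, there are two distinct factorizations $w\neq w'$ over $G$ with equal matrix products. Because every matrix in \sltwoz is invertible, equal products with a common prefix (respectively suffix) force the remainders to have equal products as well, so one may strip the longest common prefix and the longest common suffix. Away from the degenerate cases this leaves two factorizations whose first generators differ and whose last generators differ; writing $M_1,M_3\in G$ for the differing first generators, $M_2,M_4\in G$ for the differing last generators, and $M,M'\in S\cup\{\vect I\}$ for the middle blocks, non-freeness is equivalent to solvability of
\[
M_1 M M_2 = M_3 M' M_4,\qquad M_1\neq M_3,\ M_2\neq M_4,
\]
i.e.\ to $M_1 M M_2 M_4^{-1} M'^{-1} M_3^{-1}=\vect I$ for some $M,M'$. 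Second, I would turn this into a single identity question over a regular set. For each of the finitely many (at most $O(|G|^4)$) admissible tuples $(M_1,M_3,M_2,M_4)$, the set of all left-hand sides $M_1 M^{*} M_2 M_4^{-1} (M^{-1})^{*} M_3^{-1}$, with $M$ ranging over all products of generators, is a regular subset of \sltwoz, since the inverses of the generators again lie in \sltwoz with polynomially bounded entries. Their finite union $R$ is a regular subset, constructible in polynomial time, and by the characterization above $S$ is not free precisely when $\vect I\in R$. Applying the NP membership of the identity problem for regular subsets of \sltwoz from~\cite{BellHP16} then places non-freeness in NP, and combined with the NP-hardness of the theorem this gives NP-completeness.

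The step requiring the most care — and the main obstacle — is the equivalence in the second paragraph, in particular the degenerate cases of the prefix/suffix stripping. One must check what happens when a word is a prefix or a suffix of the other (which forces some nonempty generator product to equal $\vect I$, i.e.\ the separate subcase $\vect I\in S$) and when a remainder collapses to a single generator, so that the block decomposition $M_1 M M_2$ with an empty middle must not be mis-counted. The clean way to absorb all of these is to add the regular subset $G^{+}$ (testing $\vect I\in S$) to the union $R$ and to verify that, in the backward direction, $M_1\neq M_3$ indeed guarantees two genuinely distinct factorizations of the common matrix. One also has to confirm that every matrix appearing in $R$ stays inside \sltwoz and that the total size of $R$ is polynomial in the input, so that the reduction to the NP identity problem is faithful.
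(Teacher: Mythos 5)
Your proposal follows essentially the same route as the paper: characterize non-freeness by the equation $M_1 M M_2 = M_3 M' M_4$ with $M_1 \neq M_3$, $M_2 \neq M_4$, rewrite it as membership of the identity in the regular subset $M_1 M^{*} M_2 M_4^{-1} (M^{-1})^{*} M_3^{-1}$ of \sltwoz, and invoke the NP upper bound of Bell et al.~\cite{BellHP16} together with the preceding NP-hardness theorem. Your treatment is in fact somewhat more careful than the paper's terse sketch, since you explicitly address the degenerate cases (common prefixes/suffixes, the subcase $\vect{I}\in S$, and empty middle blocks) that the paper glosses over.
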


If the matrix semigroup is not free (not every matrix have unique factorization)
we still have a question whether each matrix in a given semigroup has only a finite number of factorizations.
Next we show that the problem of checking whether there exists a matrix in the 
semigroup which has an infinite number of factorizations is 
decidable and NP-hard in \sltwoz.

\begin{theorem}\label{thm:ffp}
Given a matrix semigroup~$S$ in \sltwoz generated by the set~$G$ of matrices, the problem of deciding whether $S$ contains 
a matrix with an infinite number of factorizations is decidable and NP-hard.
\end{theorem}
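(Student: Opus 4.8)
The plan is to prove both parts through a single structural characterization of the matrices with infinitely many factorizations. For a word $w=g_{i_1}\cdots g_{i_n}\in G^{*}$ let $\mu(w)=g_{i_1}\cdots g_{i_n}$ denote the corresponding matrix product in \sltwoz. I claim that $S$ contains a matrix with infinitely many factorizations if and only if there exist words $\alpha,q,\beta\in G^{*}$ with $|\alpha\beta|\ge 1$ and $\mu(\alpha)\mu(q)\mu(\beta)=\mu(q)$. The direction $(\Leftarrow)$ is immediate: if such words exist then $\mu(\alpha)^{i}\mu(q)\mu(\beta)^{i}=\mu(q)$ for every $i\ge 0$ by induction, so the pairwise distinct words $\alpha^{i}q\beta^{i}$ are all factorizations of the single matrix $\mu(q)$, which therefore has infinitely many of them.

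For $(\Rightarrow)$ I would fix a matrix $N$ with infinitely many factorizations and analyze its factorization language $F_{N}=\{w\in G^{*}:\mu(w)=N\}$. By \rlem{lem:context} the set of $\Sigma_{SR}$-representations of $N$ is context-free, and $F_{N}$ is exactly its preimage under the homomorphism $h:G^{*}\to\Sigma_{SR}^{*}$ sending each generator to its reduced $\{s,r\}$-word (\rthm{thm:Lyndon}); since context-free languages are closed under inverse homomorphism, $F_{N}$ is context-free. As $F_{N}$ is infinite, the pumping lemma yields a word $p\,\alpha\,q\,\beta\,t\in F_{N}$ with $|\alpha\beta|\ge 1$ and $p\,\alpha^{i}q\beta^{i}t\in F_{N}$ for all $i\ge 0$; comparing the products for $i=0$ and $i=1$ and cancelling $\mu(p)$ on the left and $\mu(t)$ on the right in the group \sltwoz gives precisely $\mu(\alpha)\mu(q)\mu(\beta)=\mu(q)$, establishing the claim.

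For decidability it then suffices to decide the existence of such $\alpha,q,\beta$. Rewriting the relation as $\mu(\alpha\,q\,\beta\,\overline{q})=\vect{I}$ over the alphabet $G\cup G^{-1}$ (where $\overline{q}$ is the formal inverse word of $q$ and $\mu(g^{-1})=\mu(g)^{-1}$), the set of witnesses becomes an effectively constructible subset of \sltwoz, and the question turns into whether this subset contains the identity. I would settle this using the symbolic-representation machinery of \sltwoz (\rthm{thm:Lyndon} and \rlem{lem:context}) together with the decidability of the identity problem and the associated reachability questions in \sltwoz, exploiting that \sltwoz is virtually free and hence has a context-free word problem. For NP-hardness I would reduce from the equal subset sum problem, reusing the encoding of the preceding theorem but closing the automaton structure of \rfig{fig:structure} into a cycle, so that an equal-subset-sum witness produces a non-trivial cancelling cycle evaluating to the identity (equivalently, realizing a relation $\mu(\alpha)\mu(q)\mu(\beta)=\mu(q)$); repeating this cycle then yields a matrix with infinitely many factorizations, while a negative instance leaves every matrix finitely factorizable by the maximal-partial-cycle decomposition used above.

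The main obstacle lies on the decidability side. The ``conjugated'' families $\alpha^{i}q\beta^{i}$ with $\mu(\alpha),\mu(\beta)\neq\vect{I}$ are strictly more general than a non-empty product equal to the identity, so one cannot reduce the whole problem to the identity problem; this makes the witness set genuinely context-free rather than regular and rules out a direct appeal to the known NP bound for the identity in a regular subset, forcing one to exploit the tree-like, virtually-free geometry of \sltwoz to decide the relevant emptiness question. On the hardness side the delicate point is the converse direction, namely verifying that the added cycle gadget introduces no spurious infinite family of factorizations when the equal subset sum instance is negative.
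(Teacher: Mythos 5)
Your biconditional characterization (some matrix has infinitely many factorizations iff $\mu(\alpha)\mu(q)\mu(\beta)=\mu(q)$ for words $\alpha,q,\beta\in G^{*}$ with $|\alpha\beta|\ge 1$) is correct, and the forward direction via context-freeness of $F_N=h^{-1}(L_M)$ plus the pumping lemma is sound. The genuine gap is that this characterization does not yield decidability, and that is exactly where your argument stops. The witness language $\{\alpha q\beta\overline{q}\}$ is context-free rather than regular because of the $q\cdots\overline{q}$ correlation, so the identity-in-a-regular-subset machinery does not apply; and ``intersect the context-free witness language with the context-free word problem of the virtually free group \sltwoz and test emptiness'' is not an algorithm, since emptiness of the intersection of two context-free languages is undecidable in general. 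You correctly note (and \rpro{prop:recurrent} of the paper confirms) that the conjugated case with $\mu(\alpha)\ne\vect{I}\ne\mu(\beta)$ cannot be collapsed to the identity problem, so the obstacle you flag is real and unresolved: the decidability half of the theorem is not established by your proposal.

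The paper avoids this entirely with a more elementary construction: it builds an NFA for the regular language $L_S=\{w_1,\dots,w_n\}^{+}$ over $\Sigma_{SR}$, saturates it with $\varepsilon$-transitions from $q$ to $p$ whenever $p$ is reachable from $q$ by reading $ss$ or $rrr$ (iterating the Choffrut--Karhum\"aki procedure), and observes that a matrix with infinitely many factorizations exists iff the saturated automaton has a cycle consisting only of $\varepsilon$-transitions, which is checkable via $\varepsilon$-closures. Salvaging your route would essentially amount to reproving this. On hardness, the paper reduces from subset sum (not equal subset sum), closing the chain into the cycle of \rfig{fig:structure2} with the words $k\cdot\overline{a}^{x}\cdot\overline{(k+1)}$ and $(2k+1)\cdot\overline{b}^{x}\cdot\overline{0}$; your ESSP-with-a-cycle sketch is plausible in spirit, but the converse direction (no spurious infinite families on negative instances), which you yourself identify as the delicate point, is left undone, whereas the paper carries it out via the maximal-partial-cycle decomposition.
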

\begin{proof}
Let us consider a matrix semigroup~$S$ which is generated by the set 
$G = \{ M_1,M_2, \ldots, M_n\}$ of matrices.
Let $w_1, w_2, \ldots, w_n \in \Sigma_{SR}^*$ be words encoding the generators, such that 
$\varphi(w_i) = M_i$ for $1 \le i \le n$. Then, we can define a regular language~$L_S$ 
corresponding to $S$ as $L_S = \{ w_1, w_2 ,\ldots , w_n\}^+$.
Let $A = (Q, \Sigma, \delta, Q_0, F)$ be an NFA accepting $L_S$ constructed based on $S$. 
For states~$q$ and $p$, where the state~$p$ is reachable from $q$ by reading 
$ss$ or $rrr$, we add an $\varepsilon$-transition from $q$ to $p$. We repeat 
this process until there is no such pair of states following to the procedure proposed in~\cite{ChoffrutK10}.

If there exists a matrix $M$ which can be represented 
by infinitely many factorizations over $G$, then there is an 
infinite number of accepting runs for the matrix $M$ in $A$. 
It is easy to see that we have an infinite number of accepting runs 
for some matrix $M \in S$ if and only if there is a cycle only consisting of 
$\varepsilon$-transitions. As we can compute the $\varepsilon$-closure of 
states in $A$, the problem of deciding whether there exists a matrix 
with an infinite number of factorizations is decidable.

For the NP-hardness of the problem, we modify and adapt 
the NP-hardness proof of the identity problem in \sltwoz~\cite{BellP12}. 
We use an encoding of the {\em subset sum problem} (SSP), 
which is, given a set~$U = \{ s_1, s_2, \ldots, s_k \}$ of 
$k$ integers, to decide whether or not there exists a subset
$U' \subseteq U$ whose elements sum up to the given integer~$x$. Namely,
$\sum_{s \in U'} s = x.$

Define an alphabet
$\Sigma = \{0,1,  \ldots,  2k+1, \overline{1}, \overline{2}, \ldots,  \overline{(2k+1)}, a,b,\overline{a}, \overline{b}\}.$
We define a set~$W$ of words which encodes the SSP instance.
\begin{align}\label{eq:ssp}
\begin{split}
W = &\;\; \{ i \cdot a^{i+1} \cdot \overline{(i+1)},\;\; i \cdot \varepsilon \cdot \overline{(i+1)}  \mid 0 \le i \le k-1 \}\;\; \cup\\
       & \;\; \{ i \cdot b^{i+1} \cdot \overline{(i+1)},\;\; i \cdot \varepsilon \cdot \overline{(i+1)}  \mid k+1 \le i \le 2k \} \;\;\cup\\
       &\;\; \{ k \cdot \overline{a}^x \cdot \overline{(k+1)} \} \cup \{(2k+1) \cdot \overline{b}^{x} \cdot \overline{0}\} \subseteq \Sigma^*.
\end{split}       
\end{align}

\rfig{fig:structure2} shows the structure of the word encoding of the SSP instance.
The full proof for showing that the matrix semigroup $S$ corresponding to $W^+$ has 
a matrix with an infinite number of factorizations if and only if the SSP instance has a 
solution can be found in the archive version of the paper.
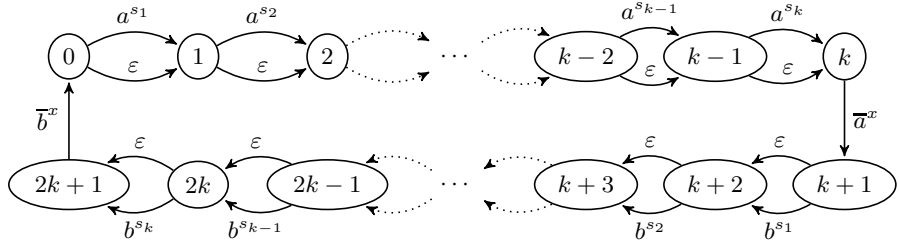
\begin{figure}[t]
\centering
\begin{tikzpicture}[->,>=stealth',shorten >=1pt,auto,node distance=1.7cm,
                    semithick]

\node[elliptic state]         (A) {0};
\node[elliptic state]         (B) [right of=A] {1};
\node[elliptic state]         (C) [right of=B] {2};
\node[]         (C1) [right of=C] {$\cdots$};
\node[elliptic state]         (D) [right of=C1] {$k-2$};
\node[elliptic state]         (E) [right of=D] {$k-1$};
\node[elliptic state]         (F) [right of=E] {$k$};
\node[elliptic state]         (G) [below of=F] {$k+1$};
\node[elliptic state]         (H) [left of=G] {$k+2$};
\node[elliptic state]         (I) [left of=H] {$k+3$};
\node[]         (C2) [left of=I] {$\cdots$};
\node[elliptic state]         (J) [left of=C2] {$2k-1$};
\node[elliptic state]         (K) [left of=J] {$2k$};
\node[elliptic state]         (L) [left of=K] {$2k+1$};

  \path (A) edge     [bend left]         node {$a^{s_1}$} (B)
   edge     [bend right]         node {$\varepsilon$} (B)
(B) edge     [bend left]         node {$a^{s_2}$} (C)
   edge     [bend right]         node {$\varepsilon$} (C)
(C) edge     [bend left,dotted]         node {$$} (C1)
   edge     [bend right,dotted]         node {$$} (C1)
   (C1) edge     [bend left,dotted]         node {$$} (D)
   edge     [bend right,dotted]         node {$$} (D)
(D) edge     [bend left]         node {$a^{s_{k-1}}$} (E)
   edge     [bend right]         node {$\varepsilon$} (E)
(E) edge     [bend left,above]         node {$a^{s_{k}}$} (F)
   edge     [bend right,above]         node {$\varepsilon$} (F)
(F) edge              node {$\overline{a}^{x}$} (G)
(G) edge     [bend left]         node {$b^{s_1}$} (H)
   edge     [bend right,above]         node {$\varepsilon$} (H)
   (H) edge     [bend left]         node {$b^{s_2}$} (I)
   edge     [bend right,above]         node {$\varepsilon$} (I)
   (C2) edge     [bend left,dotted]         node {$$} (J)
   edge     [bend right,dotted]         node {$$} (J)
   (I) edge     [bend right,dotted]         node {$$} (C2)
   edge     [bend left,dotted]         node {$$} (C2)
   (J) edge     [bend left,below]         node {$b^{s_{k-1}}$} (K)
   edge     [bend right,above]         node {$\varepsilon$} (K)
   (K) edge     [bend left,below]         node {$b^{s_k}$} (L)
   edge     [bend right,above]         node {$\varepsilon$} (L)
(L) edge              node {$\overline{b}^{x}$} (A)
;
\end{tikzpicture}
\caption{Structure of the matrix semigroup encoded by the set~$W$.}
\label{fig:structure2}
\end{figure}
\qed\end{proof}

\section{Matrix factorizability problems}

In the matrix semigroup freeness problem, we ask whether every matrix 
in the semigroup has a unique factorization. Instead of considering a question about 
every matrix in the semigroup, we restrict our question to a given particular matrix,
which may have a unique factorization, a finite number of unique factorizations or even an infinite
number of unique factorizations.

\subsection{Unique factorizability problem}

In the {\em matrix unique factorizability problem}, 
we consider the problem of deciding 
whether or not a particular matrix $M$ in $S$ has a unique factorization 
over $G$. 
We first establish the decidability and NP-hardness of the problem.

\begin{theorem}\label{thm:uniquefp}
Given a matrix semigroup~$S$ in \sltwoz generated by the set~$G$ of 
matrices and a particular matrix~$M$ in $S$, the problem of deciding whether 
the matrix~$M$ has more than one factorization over~$G$ is decidable and NP-hard.
\end{theorem}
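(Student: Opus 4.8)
The plan is to treat decidability and NP-hardness separately, reusing the symbolic machinery already set up. For decidability I would reduce the question to counting the words of a context-free language. Fix a word representation $w_i \in \Sigma_{SR}^*$ of each generator $M_i \in G$ (for instance the canonical reduced word of \rthm{thm:Lyndon}), and let $h : G^* \to \Sigma_{SR}^*$ be the morphism determined by $h(M_i) = w_i$. By \rlem{lem:context} the set $C_M = \{ w \in \Sigma_{SR}^* \mid \varphi(w) = M \}$ of all word representations of $M$ is context-free. A factorization $M_{i_1} \cdots M_{i_m} = M$ over $G$ is precisely a word $u \in G^+$ with $\varphi(h(u)) = M$, i.e.\ with $h(u) \in C_M$; hence the set of factorizations of $M$ is $L_M = h^{-1}(C_M) \cap G^+$. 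Since context-free languages are closed under inverse morphism and under intersection with regular sets, $L_M$ is context-free, and distinct factorizations are distinct words of $L_M$. It then remains to decide whether $|L_M| \ge 2$, which is effective: emptiness and finiteness of a context-free language are decidable, and in the finite case the length of its members is bounded, so they can be enumerated and counted.

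For NP-hardness I would reduce from the subset sum problem (SSP), reusing the group-alphabet encoding of Lemmas~\ref{lem:binaryencoding}--\ref{lem:matrixencoding} exactly as in \rthm{thm:ffp}. Given an SSP instance $\{s_1, \ldots, s_k\}$ with target $x$, I would introduce border letters $0, 1, \ldots, k$ and let $W$ consist of the ``subset-choice'' words $(j-1)\cdot a^{s_j} \cdot \overline{j}$ and $(j-1) \cdot \overline{j}$ for $1 \le j \le k$, together with a single extra \emph{direct} generator $0 \cdot a^{x} \cdot \overline{k}$; the whole set $W$ is then mapped into \sltwoz in polynomial time. Taking $M$ to be the matrix of the direct generator, the border-letter argument of the non-freeness reduction shows that the only words in $W^+$ reducing to $0 \cdot a^{x} \cdot \overline{k}$ are (i) the direct generator alone and (ii) the complete paths from state $0$ to state $k$ whose selected $a$-blocks sum to exactly $x$. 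Thus $M$ always has the direct factorization, and it has a second, distinct factorization if and only if some subset of $\{s_1, \ldots, s_k\}$ sums to $x$; that is, $M$ is not uniquely factorizable iff the SSP instance is solvable, giving NP-hardness.

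The main obstacle is the correctness rather than the construction of the reduction: I must rule out spurious factorizations of $M$. This is where the partial-cycle / border-letter analysis is essential --- using that the first and last border letters of a partial cycle are strictly ordered integers and that the letter $a$ is never cancelled under reduction, one argues that concatenations mixing the direct generator with path generators (for example a trailing $\overline{k}$ meeting a leading $0$), or paths that do not run cleanly from state $0$ to state $k$, cannot reduce to $0 \cdot a^{x} \cdot \overline{k}$. The direct generator is included precisely to furnish a guaranteed baseline factorization, so that the existence of even one subset summing to $x$ already raises the factorization count above one, which matches the ``at least one solution'' semantics of SSP; moreover the strictly increasing border letters keep the structure acyclic, so $M$ has only finitely many factorizations and the reduction lands in the more-than-one, rather than the infinitely-many, regime of \rthm{thm:ffp}.
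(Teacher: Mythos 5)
Your argument is correct, and the two halves relate to the paper differently. For decidability you and the paper use the same underlying idea --- the context-free language of \rlem{lem:context} played against the regular structure of $\{w_1,\dots,w_n\}^+$ --- but your formalization via $L_M = h^{-1}(C_M)\cap G^+$ is cleaner: it makes factorizations literally the words of a single effectively constructible context-free language over the alphabet $G$, so ``more than one factorization'' becomes ``$|L_M|\ge 2$,'' which is plainly decidable. The paper instead intersects $L_M$ with $L_S\subseteq\Sigma_{SR}^*$ and then has to count accepting paths of the surviving word(s) in an NFA for $L_S$ to recover the number of decompositions into generator words; your inverse-morphism trick absorbs that path-counting step into the closure properties and avoids the slightly awkward case split on $|L_M\cap L_S|$. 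For NP-hardness your reduction is genuinely different. The paper reuses the two-loop identity-problem gadget of \rthm{thm:ffp} (the $a$-chain, the $b$-chain, and the closing words $k\cdot\overline{a}^x\cdot\overline{(k+1)}$ and $(2k+1)\cdot\overline{b}^x\cdot\overline{0}$), picks $M = f(\alpha(0\cdot\varepsilon\cdot\overline{1}))$, and argues that a solvable SSP instance puts the identity in the semigroup, whence $M$ acquires infinitely many factorizations; the converse direction then needs the full partial-cycle analysis of that cyclic structure. Your acyclic single chain with one ``direct'' generator $0\cdot a^x\cdot\overline{k}$ is simpler and more targeted: the strictly increasing border letters force the direct generator to stand alone and force every other factorization to be a complete $0\to k$ chain, so $M$ has a second factorization iff some subset sums to $x$, and the count stays finite. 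What the paper's heavier gadget buys is reuse --- the same construction serves \rthm{thm:ffp} and \rthm{thm:recurrent} --- whereas your version cleanly separates ``more than one'' from ``infinitely many,'' which is arguably the more faithful hardness witness for this particular problem. Two small points worth a sentence in a polished write-up: the integers $s_j$ and $x$ are given in binary, so you should invoke the closed form of Lemma~3 to argue the matrices (not the words $a^{s_j}$) are computed in polynomial time, and you should note that the words of $W$ are distinct reduced words, so by injectivity of $f\circ\alpha$ they yield distinct generators and word-factorizations coincide with matrix-factorizations.
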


\begin{proof}
From \rlem{lem:context}, we can represent a set of all unreduced representations 
for $M$ over $\Sigma_{SR} = \{s,r\}$ as a context-free language~$L_M$.

We can also obtain a regular language that corresponds to the 
matrix semigroup~$S$. Let $G = \{M_1, M_2, \ldots, M_n\}$ be the generating set 
of $S$. Namely, $S = \langle M_1, M_2, \ldots, M_n \rangle$.
Let $w_1, w_2, \ldots, w_n \in \Sigma_{SR}^*$ be words encoding the generators, such that 
$\varphi(w_i) = M_i$ for $1 \le i \le n$. Then, we can define a regular language~$L_S$ 
corresponding to $S$ as $L_S = \{ w_1, w_2 ,\ldots , w_n\}^+$.
Then, the intersection of $L_M \cap L_S$ contains all words that correspond to the matrix~$M$ 
in the semigroup~$S$. If the cardinality of 
$L_M \cap L_S$ is larger than one, we immediately have two different factorizations 
for the matrix~$M$ over $G$. Therefore, let us assume that $|L_M \cap L_S| = 1$ and 
$w$ be the only word in $L_M \cap L_S$. Clearly, $\varphi(w) = M$ 
and $M$ can be generated by the set~$G$. Note that each accepting path of $w$ in $L_S$ 
corresponds to a unique factorization of $M$ over $G$. Now we can decide whether 
or not $M$ has a unique factorization over $G$ by counting the number of accepting paths of words in 
$L_M \cap L_S$ from an NFA accepting $L_S$. 

The NP-hardness can be proven by the reduction from the SSP 
in a similar manner to the proof of \rthm{thm:ffp}.
See \req{eq:ssp} for the word encoding of the SSP instance. 
Let us pick the word~$w = 0 \cdot \varepsilon \cdot \overline{1}$ in $W$ and notice that the matrix $M = f(\alpha(w))$ which 
is encoded from $w$ is in the matrix semigroup $S$.
We will show that the 
matrix~$M$ in $S$ has at least two factorizations over 
the generating set~$\{ f(\alpha(w)) \mid w \in W \}$ of $S$ if and only if the SSP instance 
has a solution. The full proof can be found in the archive version.
\qed\end{proof}

We reduce the {\em fixed element PCP} (FEPCP)~\cite{BellP08} which is proven to be undecidable 
to the unique factorizability problem over $\mathbb{Z}^{4\times 4}$ for the following 
undecidability result.

\begin{theorem}\label{thm:unde}
Given a matrix semigroup~$S$ over $\mathbb{Z}^{4\times 4}$ 
generated by the set~$G$ of matrices and a particular matrix~$M$ in $S$, 
the problem of deciding whether the matrix~$M$ has more than one factorization over~$G$ is undecidable.
\end{theorem}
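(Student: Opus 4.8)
The plan is to reduce the fixed element PCP (FEPCP), shown undecidable in \cite{BellP08}, to the question of whether a designated matrix $M$ admits a second factorization. An FEPCP instance consists of a list of word pairs $(u_1,v_1),\dots,(u_n,v_n)$ over a (group) alphabet together with a distinguished pair that is forced into any solution, the \emph{fixed element}; the problem asks whether the remaining pairs can be composed so that the two concatenations coincide. My goal is to build, in polynomial time, a generating set $G\subseteq\mathbb{Z}^{4\times 4}$ (in fact in $\mathrm{SL}(4,\mathbb{Z})$) and a particular matrix $M\in\langle G\rangle$ such that $M$ has strictly more than one factorization over $G$ if and only if the instance has a solution.

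First I would encode each pair using the injective morphism $f\circ\alpha$ supplied by \rlem{lem:binaryencoding} and \rlem{lem:matrixencoding}, which embeds words over a group alphabet into $\mathrm{SL}(2,\mathbb{Z})$. Placing the two sides of a pair on the diagonal blocks, I assign to pair $i$ the matrix
\[
N_i=\begin{pmatrix} f(\alpha(u_i)) & 0\\ 0 & f(\alpha(v_i))\end{pmatrix}\in\mathrm{SL}(4,\mathbb{Z}).
\]
A product of such matrices along a word $w$ is block diagonal with blocks $f(\alpha(g(w)))$ and $f(\alpha(h(w)))$, where $g,h$ are the two morphisms read off the pairs. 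Because $f\circ\alpha$ is injective, the two diagonal blocks are equal exactly when $g(w)=h(w)$, i.e. exactly when $w$ is a correspondence. This block comparison is the reason four dimensions are needed: a single $2\times2$ block would fold the two sides into one product, which cannot decide correspondence since freeness in \sltwoz is decidable.

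To pin down a fixed target $M$, I would reuse the border-letter discipline of the earlier reductions (\req{eq:ssp} and \rfig{fig:structure2}): each $N_i$ additionally carries, identically in both blocks, a pair of border letters that force the generators into a path/cycle structure, so that only consecutively composable generators allow their borders to cancel. I then take $M$ to be the short matrix determined by the fixed element together with its frame; this gives $M$ one guaranteed direct factorization. A second, distinct factorization of $M$ corresponds to a long closed path whose border letters cancel and whose remaining $g$- and $h$-contents reduce to the fixed element's contribution, which by the block-equality observation happens precisely when the non-fixed pairs can be matched, i.e. when the FEPCP instance has a solution. Hence $M$ has more than one factorization iff a solution exists, and since the encoding is computable in polynomial time by the cited lemmas, undecidability of FEPCP transfers.

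The hard part will be the backward direction: showing that when there is no solution the matrix $M$ has \emph{exactly} one factorization, so that no spurious second factorization is introduced. This requires the border letters to rule out cancellations between non-adjacent blocks and to forbid partial cycles from closing up in unintended ways---the same partial-cycle bookkeeping used in \rthm{thm:uniquefp}---together with the injectivity of $f\circ\alpha$ to prevent two distinct group words from accidentally coinciding as matrices. I would also have to check the routine nondegeneracy conditions: that all generators $N_i$ are distinct, that $M$ equals no single generator other than the intended one, and that empty or degenerate products cannot furnish extra factorizations, so that the only way to obtain a second factorization of $M$ is through a genuine correspondence.
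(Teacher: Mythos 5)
Your proposal is correct and follows essentially the same route as the paper: a reduction from FEPCP using $2\times 2$ blocks placed on the diagonal of $4\times 4$ matrices via the injective encoding $f\circ\alpha$, with the target $M$ being the matrix of the fixed element $(\star,\star)$, adjoined to the generating set so that it has one guaranteed factorization and acquires a second one exactly when the instance is solvable. The only cosmetic difference is that the paper dispenses with your extra border-letter scaffolding, relying instead on the FEPCP guarantee that $(\varepsilon,\varepsilon)\notin\langle P\rangle$ to rule out degenerate extra factorizations.
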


\subsection{Recurrent matrix problem}

We first tackle the problem of 
deciding whether or not a particular matrix in the semigroup has an infinite number 
of factorizations. Note that we call this decision problem the {\em recurrent matrix problem} 
instead of the {\em matrix finite factorizability problem} as we named for the other variants. 
The recurrent matrix problem has been introduced by Bell and Potapov~\cite{BellP08} 
and proven to be undecidable for matrices over $\mathbb{Z}^{4 \times 4}$ based on the reduction from 
FEPCP.

We show that the recurrent matrix problem is decidable and NP-hard for matrix semigroups in 
\sltwoz. We first mention that the recurrent matrix problem is different with the 
identity problem. One may think that the recurrent matrix problem is equivalent to the 
identity problem since it is obvious that if the identity matrix exists then every matrix in 
the semigroup has an infinite number of factorizations. However, the opposite does not 
hold as follows:

\begin{proposition}\label{prop:recurrent}
Let $S$ be a matrix semigroup generated by the generating set~$G$ and 
$M$ be a matrix in $S$. Then, the matrix $M$ has an infinite number of factorizations over $G$ 
if the identity matrix exists in $S$. However, the opposite does not hold in general.
\end{proposition}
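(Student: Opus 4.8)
The plan is to establish the implication and the failure of its converse as two independent arguments. For the implication, I would suppose that the identity matrix belongs to $S$. Since every element of $S$ is a product of generators from $G$, there is a nonempty factorization $w$ of $I$ over $G$, say of length $p \ge 1$. Now fix an arbitrary $M \in S$ together with one of its factorizations $v$ of length $q \ge 1$. For each $t \ge 0$ the product $v w^t$ is again a factorization of $M$, because each inserted block $w$ evaluates to $I$ and therefore leaves the value of the product unchanged. These factorizations are pairwise distinct, since as sequences of generators they have the strictly increasing lengths $q + tp$ (using $p \ge 1$). Hence $M$ has infinitely many factorizations over $G$, and as $M$ was arbitrary this proves the first assertion. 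The only thing to check is that $w$ is genuinely nonempty, which holds because membership in the semigroup is witnessed by nonempty products.

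For the converse, I would supply a counterexample consisting of a single non-identity idempotent generator. Take $G = \{A\}$ with $A = \begin{pmatrix} 1 & 0 \\ 0 & 0 \end{pmatrix}$, so that $A^2 = A$ and hence $A = A^k$ for every $k \ge 1$. Then the words $A, A^2, A^3, \ldots$ are infinitely many distinct factorizations of the single matrix $A \in S$, whereas $S = \langle A \rangle = \{A\}$ does not contain $I$ because $A \ne I$. Thus a matrix may have infinitely many factorizations even when the identity is absent from the semigroup, which refutes the converse.

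The only subtle point, which I regard as the main obstacle, is the recognition that such a counterexample cannot be realized inside \sltwoz. Every matrix there is invertible, so $A^2 = A$ already forces $A = I$; more generally any torsion generator of order $n$ contributes $A^n = I$ to $S$. The counterexample must therefore be drawn from the larger semigroup $\mathbb{Z}^{2\times 2}$, which is precisely why the statement is phrased for general matrix semigroups and qualified by \emph{in general}. I would also stress that distinctness of factorizations in the first part is measured at the level of words over $G$, not of matrix values, so the length argument $q + tp$ is what makes the counting rigorous.
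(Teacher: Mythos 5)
Your argument for the forward implication is correct and is essentially what the paper treats as obvious: pad an arbitrary factorization $v$ of $M$ with powers of a nonempty factorization $w$ of $I$ and distinguish the results by length. Your counterexample $A=\mathrm{diag}(1,0)$ also genuinely refutes the converse for the proposition as literally stated, since $S=\langle A\rangle=\{A\}$ omits $I$ while the words $A, AA, AAA,\ldots$ are infinitely many factorizations of $A$. So the proposal does prove the statement, but via a much weaker counterexample than the paper's.

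The real problem is your closing claim that such a counterexample ``cannot be realized inside \sltwoz'' and ``must therefore be drawn from the larger semigroup $\mathbb{Z}^{2\times2}$.'' That is false, and the paper's own proof is the witness: using the monomorphisms of Lemmas~\ref{lem:binaryencoding} and~\ref{lem:matrixencoding} it builds a three-generator subsemigroup of \sltwoz (from the words $0\cdot a\cdot\overline{0}$, $0\cdot\overline{a}\cdot\overline{1}$, $1\cdot\overline{a}\cdot\overline{1}$ over a group alphabet) in which no product of generators reduces to the empty word, hence $I\notin S$, and yet the matrix encoding $0\cdot\varepsilon\cdot\overline{1}$ equals the product $(0\cdot a\cdot\overline{0})^n(0\cdot\overline{a}\cdot\overline{1})(1\cdot\overline{a}\cdot\overline{1})^{n-1}$ for every $n$, so it has infinitely many factorizations. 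Your invertibility/torsion reasoning only excludes single-generator counterexamples; with several generators a nontrivial relation between two products can hold without any product equalling the identity (in a group, $u=v$ gives $uv^{-1}=I$, but $v^{-1}$ need not be a product of generators). This is not a pedantic point: the proposition is there to show that the recurrent matrix problem in \sltwoz does not reduce to the identity problem in \sltwoz, and for that purpose the counterexample must live in \sltwoz --- your singular idempotent would not support the paper's subsequent argument.
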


Now we establish the results for the recurrent matrix problem in \sltwoz.

\begin{theorem}\label{thm:recurrent}
The recurrent matrix problem in \sltwoz is decidable and in fact, NP-hard.
\end{theorem}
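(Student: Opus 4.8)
The plan is to treat the two halves of the statement separately, reusing the machinery already built for \rthm{thm:uniquefp} and \rthm{thm:ffp}. For decidability I would follow the construction in the proof of \rthm{thm:uniquefp} essentially verbatim. By \rlem{lem:context} the set $L_M$ of all unreduced $\Sigma_{SR}$-representations of $M$ is context-free, and $L_S=\{w_1,\dots,w_n\}^+$ with $\varphi(w_i)=M_i$ is regular, so $L_M\cap L_S$ is context-free and effectively constructible. The factorizations of $M$ over $G$ are in bijection with the accepting paths an NFA for $L_S$ has on the words of $L_M\cap L_S$, and since each finite word admits only finitely many decompositions into generators, $M$ has infinitely many factorizations if and only if $L_M\cap L_S$ is infinite. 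As finiteness of a context-free language is decidable, this settles decidability; the recurrent matrix problem is thus just the finiteness test for $L_M\cap L_S$, exactly parallel to the counting argument used for uniqueness.

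For NP-hardness I would reduce from the subset sum problem, reusing the cyclic encoding $W$ of \req{eq:ssp} and the automaton of \rfig{fig:structure2}, and passing to matrices in polynomial time via the monomorphisms of Lemmas~\ref{lem:binaryencoding} and \ref{lem:matrixencoding} (whose composition keeps the entries of polynomial bit-size). As the distinguished matrix I would take $M=f(\alpha(u))$ for the single generator $u=0\cdot\varepsilon\cdot\overline{1}\in W$, so that $M\in G\subseteq S$ holds for \emph{every} instance. The claim to establish is that $M$ has infinitely many factorizations over $G$ if and only if the SSP instance is solvable.

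The easy direction uses \rpro{prop:recurrent}. If the instance has a solution, then choosing in the $a$-block and in the $b$-block subsets each summing to $x$ yields a word $C\in W^+$ that traverses the whole cycle once and whose inner border letters and $a,\overline{a},b,\overline{b}$ blocks all cancel, so $C$ reduces to the empty word and $f(\alpha(C))=\vect{I}$. Hence $\vect{I}\in S$, and by \rpro{prop:recurrent} every matrix of $S$ — in particular $M$ — has infinitely many factorizations; concretely the products $C^{t}u$ for $t\ge 0$ are pairwise distinct factorizations of $M$. For the converse I would invoke the border-letter / partial-cycle analysis from the first theorem to argue that any factorization of $M$ corresponds to a walk in \rfig{fig:structure2} from state $0$ to state $1$ whose $a,b$-content reduces to the empty word. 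Since the graph is a single directed cycle through state $0$, such a walk consists of some number of full loops at $0$ followed by the $\varepsilon$-edge $0\to 1$, and when no subset sums to $x$ each loop contributes a non-empty reduced block $a^{\,c_1-x}b^{\,c_2-x}$ with $c_1,c_2\neq x$, so the reduced length grows with the number of loops and only finitely many walks can represent the fixed matrix $M$.

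The main obstacle is precisely this converse direction: ruling out accidental infinite families of factorizations that are not produced by a clean identity-reducing cycle. This requires the bracketing discipline of the border letters — every generator has the shape $\langle\mathrm{source}\rangle\cdot\mathrm{content}\cdot\langle\overline{\mathrm{target}}\rangle$, forcing factorizations to be genuine automaton walks — together with the observation that without a solution the $a$- and $b$-imbalances of successive loops cannot cancel, being different letters separated by non-trivial blocks, so the reduced word strictly grows. The decidability half, by contrast, is routine once the problem is recast as testing finiteness of the context-free language $L_M\cap L_S$.
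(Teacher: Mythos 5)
Your proposal is correct and takes essentially the same route as the paper: decidability by testing finiteness of the context-free language $L_M\cap L_S$ exactly as in \rthm{thm:uniquefp}, and NP-hardness by reusing that theorem's SSP encoding with the distinguished generator $0\cdot\varepsilon\cdot\overline{1}$, noting that $f(\alpha(0\cdot\varepsilon\cdot\overline{1}))$ has infinitely many factorizations iff the instance is solvable. You in fact spell out both directions of the reduction (via \rpro{prop:recurrent} and the partial-cycle analysis) in more detail than the paper, which merely asserts them.
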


%

We also consider the matrix $k$-factorizability problem which is to decide 
whether a particular matrix $M$ in the semigroup 
has at most $k$ factorizations over the generating set $G$.

\begin{lemma}\label{lem:kfreeness}
Given a matrix semigroup~$S$ in \sltwoz generated by the set~$G$ of 
matrices, a particular matrix~$M \in S$, and a positive integer $k \in \mathbb{N}$, the problem of deciding whether 
the matrix~$M$ has more than $k$ factorizations over~$G$ is decidable and NP-hard.
\end{lemma}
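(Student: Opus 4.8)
The plan is to lift the decidability and hardness arguments of \rthm{thm:uniquefp}, which settle the case $k=1$, to an arbitrary threshold $k$. For decidability I would keep the same symbolic machinery: by \rlem{lem:context} the set of all unreduced representations of $M$ over $\Sigma_{SR}$ is a context-free language $L_M$, while the semigroup $S=\langle M_1,\dots,M_n\rangle$ is captured by the regular language $L_S=\{w_1,\dots,w_n\}^+$ with $\varphi(w_i)=M_i$. As in \rthm{thm:uniquefp}, the factorizations of $M$ over $G$ are exactly the accepting paths of an NFA $A$ recognizing $L_S$ whose read word lies in $L_M$: a path reading $w_{i_1}\cdots w_{i_m}$ with $\varphi(w_{i_1}\cdots w_{i_m})=M$ encodes the factorization $M_{i_1}\cdots M_{i_m}$, and distinct paths give distinct factorizations. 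Thus the question becomes whether the number of such accepting paths exceeds $k$.

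I would decide this in two stages. First, using the decidability of the recurrent matrix problem (\rthm{thm:recurrent}) --- equivalently, testing for an $\varepsilon$-cycle on an accepting run producing $M$ after the $\varepsilon$-transition completion of \rthm{thm:ffp} --- I check whether $M$ has infinitely many factorizations; if so, the answer is immediately that $M$ has more than $k$ of them. Otherwise the total count is finite, so $L_M\cap L_S$ is a finite context-free language and, since no $\varepsilon$-cycle is available, each of its finitely many words admits only finitely many accepting runs in $A$. I would then compute the (bounded-length) words of $L_M\cap L_S$ and, for each, tally its accepting runs, accumulating the total with the counter capped at $k+1$; comparing this capped count against $k$ decides the problem.

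For NP-hardness the case $k=1$ already suffices: deciding whether $M$ has more than one factorization is precisely the NP-hard problem of \rthm{thm:uniquefp}, obtained from the subset sum encoding of \req{eq:ssp}, and since $k$ is part of the input this restriction is a polynomial-time reduction into the general problem. Should hardness be required for every fixed $k$ separately, I would pad that construction with $k-1$ forced alternative factorizations of an auxiliary matrix, so that the target matrix acquires more than $k$ factorizations exactly when the subset sum instance is solvable.

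The main obstacle is the finite branch of the decidability argument: one must turn the qualitative guarantee \emph{finitely many factorizations} (obtained from \rthm{thm:recurrent}) into an effective enumeration, namely a computable bound on the length of the finitely many words of $L_M\cap L_S$ together with a genuine count of the accepting runs each of them carries. This is strictly more than the ``detect a second path'' step of \rthm{thm:uniquefp}, since here we must count parses up to $k+1$ through the context-free intersection rather than merely certify non-uniqueness. The infinite case and the NP-hardness are, by contrast, direct consequences of results already established.
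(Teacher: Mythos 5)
Your proposal follows essentially the same route as the paper: both reduce the question to counting accepting paths of words in the context-free/regular intersection $L_M \cap L_S$ from \rthm{thm:uniquefp}, dispatching the infinite case via finiteness of that intersection and obtaining NP-hardness from the $k=1$ instance already established there. Your treatment is somewhat more careful than the paper's (which leaves the effective enumeration of $L_M\cap L_S$ and the handling of the infinite branch implicit), but it is the same argument.
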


%

%
%

We mention that the matrix $k$-factorizability problem is also undecidable 
over $\mathbb{Z}^{4\times 4}$ following \rthm{thm:unde}.

\section{On the finite number of factorizations}

Recall that the matrix semigroup freeness problem 
examines whether or not there exists a matrix in the semigroup has 
more than one factorization. The finite freeness problem asks whether 
there exists a matrix in the semigroup which has an infinite number of factorizations. 
In that sense, we may interpret these problems as the problems asking whether 
the number of factorizations in the semigroup is bounded by one (the freeness problem) 
or unbounded (the finite freeness problem).

In this section, we are interested in finding a number~$k \in \mathbb{N}$ 
by which the number of factorizations of matrices in the matrix semigroup 
is bounded. In other words, we check whether every matrix in the semigroup 
is $k$-factorizable. However, it is not easy to define the {\em $k$-freeness problem} 
as we define the general freeness problem by the following observation.

%
%
%

Let $S$ be a matrix semigroup generated by the set~$G$ of matrices and
$M$ be a $k$-factorizable matrix over $G$. Let us denote the number of 
factorizations of $M$ by ${\rm dec}(M)$. Thus, we can write ${\rm dec}(M) = k$. 
It is easy to see that $S$ 
is free if for every matrix~$M$ in $S$, ${\rm dec}(M) = 1$. Let us assume that 
${\rm dec}(M_1) = m$ and ${\rm dec}(M_2) = n$ for 
$m,n \in \mathbb{N}$.
Then, ${\rm dec}(M_1 M_2) = k$ where $k \ge mn$. This means that 
if $S$ is not free, then there is no finite value $k$ such 
that every matrix in $S$ is $k$-factorizable.

In that reason, we define the following notion which prevents the multiplicative 
property of the number of factorizations.
We say that a matrix $M$ is {\em prime} if it is impossible to decompose $M$ into 
$M = M_1 M_2$ such that 
${\rm dec}(M) = {\rm dec}(M_1) \times {\rm dec}(M_2)$, ${\rm dec}(M_1) \ne 1$, and ${\rm dec}(M_2) \ne 1$. 
We define a matrix semigroup~$S$ to be {\em $k$-free} if 
every prime matrix~$M$ in $S$ has at most $k$ different factorizations over~$G$. 
Formally, a matrix semigroup $S$ is $k$-free if and only if 
$\max\{ {\rm dec}(M) \mid M \in S, \;\; \mbox{$M$ is prime} \} \le k.$

This definition 
gives rise to the following problem which is a generalized version of the matrix semigroup 
freeness problem.

\begin{problem}
Given a finite set~$G$ of $n \times n$ matrices generating a matrix semigroup~$S$, 
does every prime element $M \in S$ have at most $k$ factorizations over~$G$?
\end{problem}

In this paper, we leave the decidability of the $k$-freeness problem open but establish the 
PSPACE-hardness result as a lower bound of the problem, which is interesting compared 
to the NP-hardness of the other freeness problems.
	
\begin{theorem}\label{thm:kfreenesshard}
Given a matrix semigroup~$S$ in \sltwoz generated by the set~$G$ of matrices 
and a positive integer $k \in \mathbb{N}$, 
the problem of deciding whether or not 
every prime matrix in $S$ has at most $k$ factorizations is PSPACE-hard.
\end{theorem}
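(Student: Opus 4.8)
The plan is to reduce from a canonical PSPACE-complete problem, for which I would take the \emph{emptiness of intersection} problem for a family of deterministic finite automata $A_1,\dots,A_m$ over a common alphabet $\Gamma$ (an equally workable target is acceptance of a fixed input by a linear-bounded automaton), both of which are well known to be PSPACE-complete. Since PSPACE is closed under complement, it suffices to prove that the complementary question ``does some prime matrix of $S$ admit more than $k$ factorizations?'' is PSPACE-hard. The whole reduction is routed through the polynomial-time pipeline already set up in the paper: I would first describe the gadget over a group alphabet $\Sigma$, then apply \rlem{lem:binaryencoding} to pass to the binary group alphabet $\{a,b,\overline a,\overline b\}$, and finally \rlem{lem:matrixencoding} to obtain a generating set in \sltwoz. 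As in the earlier hardness arguments, the number of factorizations of a designated target matrix equals the number of words over the generators whose reduced form equals that matrix, so the combinatorial heart of the proof is to make these reducible words correspond exactly to the accepting objects of the PSPACE instance.

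The construction extends the border-letter chaining used in \rthm{thm:ffp} (see \rfig{fig:structure2}), where consecutive blocks are forced to agree because the inner border letters must cancel in pairs, leaving only a canonical residue. To capture a PSPACE computation I would run several such chains in parallel: one \emph{band} of border letters per automaton $A_j$ recording its current state, together with a shared band recording the symbol of $\Gamma$ read at each step. A block is then admissible only if, for every $j$, the transition of $A_j$ is consistent with the shared symbol, and consecutive blocks cancel only when the post-state of one matches the pre-state of the next. In this way a product of generators reduces to a designated target matrix $M_0$ precisely when it spells out a single string $w\in\Gamma^*$ together with synchronized accepting runs of all the $A_j$ on $w$. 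One then arranges $M_0$ so that $\mathrm{dec}(M_0)>k$ holds if and only if $\bigcap_j L(A_j)\neq\emptyset$, for instance by planting a single baseline factorization and one extra factorization per common accepted word and taking $k=1$, or by taking $k$ large enough to dominate the bounded ambiguity that the gadget unavoidably carries.

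Two points require care, and the second is the real obstacle. First, I must verify that $M_0$ is \emph{prime} in the sense of the $k$-freeness definition: any splitting $M_0=M_1M_2$ with both factors genuinely ambiguous would make $\mathrm{dec}(M_0)$ multiplicative and destroy the equivalence, so the chaining should be designed so that $M_0$ cannot be cut at an admissible boundary without forcing one factor to be rigid, i.e. to have $\mathrm{dec}=1$. Second, and most delicate, the $k$-freeness problem quantifies over \emph{all} prime matrices, not only $M_0$; hence in the ``no'' case I must guarantee that every prime matrix of $S$ has at most $k$ factorizations, i.e. that the generating set is essentially free away from the planted witness. Securing this global statement, while simultaneously respecting the canonical-form restrictions on \sltwoz from \rthm{thm:Lyndon} (no occurrences of $ss$ or $rrr$, and the $\bmod 3$ constraints) and keeping the encoding of polynomial size, is where I expect the main difficulty to lie.
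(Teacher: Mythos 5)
You pick the right source problem (DFA intersection emptiness) and correctly pass to the complementary question, but your construction heads in the wrong direction and leaves the decisive step open. Running the automata \emph{in parallel} --- one band of border letters per $A_j$ plus a shared symbol band, with a ``block'' admissible only if every $A_j$'s transition is consistent with the shared symbol --- requires one generator per admissible tuple of simultaneous transitions, i.e.\ on the order of $|\Gamma|\cdot n^{m}$ generators where $m$ is the number of automata. Since $m$ is part of the input (intersection emptiness is polynomial-time solvable for any fixed $m$), this is not a polynomial-time reduction. Even if you serialized the bands by round-robin interleaving, you would still face exactly the obstacle you flag yourself: certifying that in the ``no'' case \emph{every} prime matrix of $S$ has at most $k$ factorizations. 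Your proposal offers no mechanism for this, and ``planting one extra factorization per common accepted word with $k=1$'' does not control the ambiguity that the synchronization gadget itself introduces.

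The paper's proof avoids synchronization entirely by a counting trick that your proposal misses. Take $k+1$ DFAs and encode their transition graphs over \emph{disjoint} state alphabets $\Sigma_1,\dots,\Sigma_{k+1}$, but let all of them share the same input letters and the same bracket symbol $\#$: each transition $q_{i,m}\in\delta_i(q_{i,l},a)$ becomes the word $l\cdot a\cdot\overline{m}$ over $\Sigma_i\cup\Sigma_A$, with $\#\cdot\varepsilon\cdot\overline{0}$ entering the initial state and $j\cdot\varepsilon\cdot\#$ leaving each final state. A product reducing to $\#\cdot w\cdot\#$ then arises once from each DFA that accepts $w$ --- and, because each $A_i$ is deterministic, at most once from each --- so the matrix encoding $\#\cdot w\cdot\#$ has exactly as many factorizations as there are automata accepting $w$. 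Hence some prime matrix has more than $k$ factorizations if and only if some $w$ is accepted by all $k+1$ automata, i.e.\ the intersection is non-empty; determinism yields the global bound in the ``no'' case essentially for free, and primality disposes of concatenations of several $\#$-bracketed segments. The parameter $k$ of the $k$-freeness problem is thus tied to the \emph{number} of automata --- the lever your proposal never uses.
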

\begin{proof}
For the PSPACE-hardness of the problem, we reduce the DFA intersection emptiness 
problem~\cite{Kozen77} 
to the $k$-freeness problem. Note that given $k$ DFAs, 
the DFA intersection emptiness problem asks whether the intersection of $k$ DFAs is 
empty. The full proof can be found in the Appendix.
\qed\end{proof}

\section{Conclusions}

We have investigated the matrix semigroup freeness problems. The freeness 
problem is to decide whether or not every matrix in the given matrix semigroup 
has a unique factorization over the generating set of matrices. The freeness 
problem was already known to be decidable in \sltwoz. Here we have shown the 
that the freeness problem in \sltwoz is NP-hard which, along with the fact that 
the problem is in NP~\cite{BellHP16}, proves that the freeness problem in \sltwoz 
is NP-complete.
We also have studied a relaxed variant called the finite freeness problem 
in which we decide whether or not every matrix in the semigroup has a finite 
number of factorizations. We prove that the finite freeness problem in \sltwoz 
is decidable and NP-hard.

Moreover, we have considered the problem on the number of factorizations 
leading to a given particular matrix in the semigroup. The matrix unique factorizability 
problem asks whether a given matrix in the semigroup has a unique factorization 
over the generating set. We have proven that the problem is decidable and 
NP-hard. We also have studied the recurrent matrix problem that decides whether a 
particular matrix in the semigroup has an infinite number of factorizations and 
shown that it is decidable and NP-hard as well.

Lastly, we have examined the $k$-freeness problem which is a problem of deciding 
whether every prime matrix in the matrix semigroup has at most $k$ factorizations.
We have proven the PSPACE-hardness for the $k$-freeness problem which implies that the 
problem is computationally more difficult than the general freeness problem. 
We also have established the decidability and NP-hardness of the 
matrix $k$-factorizability in \sltwoz.

\bibliographystyle{abbrv}
\bibliography{freeness}

\newpage

\section*{Appendix}

\begin{customlem}{\ref{lem:context}.}
Let $M$ be a matrix in \sltwoz. Then, there exists a context-free language over $\Sigma_{SR} = \{s,r\}$ 
which contains all unreduced representations~$w \in \Sigma_{SR}^*$ such that $\varphi(w) = M$.
\end{customlem}
\begin{proof}
By~\rthm{thm:Lyndon}, we know that there exists a unique reduced word~$w \in \Sigma_{SR}^*$ 
such that either $M = \varphi(w)$ or $M = -\varphi(w)$.
The word~$w$ can be written as $w_1 w_2 \ldots w_n$, where $w_i \in \Sigma_{SR}$ for 
$1 \le i \le n$.

Let us remind that a {\em context-free grammar} (CFG)~$G$ is a four-tuple
$G = (V, \Sigma, R, S)$, where $V$ is a set of variables, 
$\Sigma$ is a set of terminals, $R \subseteq V \times (V\cup \Sigma)^*$ 
is a finite set of productions and $S\in V$ is the start variable.
Let $\alpha A \beta$ be a word over $V \cup \Sigma$, where $A \in V$ and $A \to \gamma \in R$. 
Then, we say that A can be rewritten as $\gamma$ and 
the corresponding derivation step is denoted
 $\alpha A \beta \Rightarrow \alpha \gamma \beta$.
The reflexive, transitive closure of $\Rightarrow$ is denoted by $\sd$ and
 the context-free language generated
 by $G$ is  $L(G) = \{w \in \Sigma^* \mid S \sd w\}$.

Let $G_M = (V,\Sigma_{SR}, P, V_S)$ be a CFG, where $V = \{V_S, A^+, A^-\}$ is a finite set of nonterminals, 
$\Sigma_{SR}= \{s,r\}$ is a binary alphabet, $P$ is a finite set of production rules, and $V_S$ is the start nonterminal. We define 
$P$ to contain the following production rules:
\begin{itemize}
\item $V_S \to A_1 w_1A_2 w_2 A_3 \ldots A_n w_n A_{n+1}$,
\item $A^+ \to \varepsilon \mid sA^- s \mid rA^+rA^+ r \mid rA^- rA^-r \mid A^-A ^- \mid A^+ A^+$. and
\item $A^- \to sA^+ s \mid rA^+rA^- r \mid rA^- rA^+r \mid A^-A ^+ \mid A^- A^+$,
\end{itemize}
where $A_i \in \{A^+, A^-\}$ for $1 \le i \le n+1$.

Note that if $M = \varphi(w)$, then there exists an even number of $A^-$'s from all $A_i$ for $1\le i \le n+1$ and 
otherwise, there exists an odd number of $A^-$'s. Then, it is easy to see that the CFG~$G_M$ 
generates all unreduced words encoding the matrix~$M$ by the morphism~$\varphi$. Formally, 
we write $L(G_M) = \{ w \mid \varphi(w) = M \}.$
Clearly, $L(G_M)$ is a context-free language.
\qed\end{proof}

\begin{customthm}{\ref{thm:ffp}.}
Given a matrix semigroup~$S$ in \sltwoz generated by the set~$G$ of matrices, the problem of deciding whether $S$ contains 
a matrix with an infinite number of factorizations is decidable and NP-hard.
\end{customthm}
\begin{proof}
Let us consider a matrix semigroup~$S$ which is generated by the set 
$G = \{ M_1,M_2, \ldots, M_n\}$ of matrices.
Let $w_1, w_2, \ldots, w_n \in \Sigma_{SR}^*$ be words encoding the generators, such that 
$\varphi(w_i) = M_i$ for $1 \le i \le n$. Then, we can define a regular language~$L_S$ 
corresponding to $S$ as $L_S = \{ w_1, w_2 ,\ldots , w_n\}^+$.
Let $A = (Q, \Sigma_{SR}, \delta, Q_0, F)$ be an NFA accepting $L_S$ constructed based on $S$. 
For states~$q$ and $p$, where the state~$p$ is reachable from $q$ by reading 
$ss$ or $rrr$, we add an $\varepsilon$-transition from $q$ to $p$. We repeat 
this process until there is no such pair of states.

If there exists a matrix $M$ which can be represented 
by infinitely many factorizations over $G$, then there are an 
infinite number of accepting runs for the matrix $M$ in $A$. 
It is easy to see that we have an infinite number of accepting runs 
for some matrix $M \in S$ if and only if there is a cycle only consisting of 
$\varepsilon$-transitions. As we can compute the $\varepsilon$-closure of 
states in $A$, the problem of deciding whether there exists a matrix 
with an infinite number of factorizations is decidable.


For the NP-hardness of the problem, we modify and adapt 
the NP-hardness proof of the identity problem in \sltwoz~\cite{BellP12}. 
We use an encoding of the {\em subset sum problem} (SSP), 
which is, given a set~$U = \{ s_1, s_2, \ldots, s_k \}$ of 
$k$ integers, to decide whether or not there exists a subset
$U' \subseteq U$ whose elements sum up to the given integer~$x$. Namely,
\[
\sum_{s \in U'} s = x.
\]

Define an alphabet
$\Sigma  = \{0,1,  \ldots,  2k+1, \ldots, \overline{1}, \overline{2}, \ldots,  \overline{(2k+1)}, a,b,\overline{a}, \overline{b}\}.$
We define a set~$W$ of words which encodes the SSP instance.
\[
\begin{split}
W = &\;\; \{ i \cdot a^{i+1} \cdot \overline{(i+1)},\;\; i \cdot \varepsilon \cdot \overline{(i+1)}  \mid 0 \le i \le k-1 \}\;\; \cup\\
       & \;\; \{ i \cdot b^{i+1} \cdot \overline{(i+1)},\;\; i \cdot \varepsilon \cdot \overline{(i+1)}  \mid k+1 \le i \le 2k \} \;\;\cup\\
       &\;\; \{ k \cdot \overline{a}^x \cdot \overline{(k+1)} \} \cup \{(2k+1) \cdot \overline{b}^{x} \cdot \overline{0}\} \subseteq \Sigma^*.
\end{split}       
\]

Here we define the set of border letters as $\Sigma \setminus \{a,b,\overline{a}, \overline{b}\}$.
We can show that the matrix semigroup $S$ contains a matrix with an infinite 
number of factorizations 
if and only if the SSP instance has a solution. Remark that it is already known 
that there exists a fully reducible word $w \in W^+$ such that 
$r(w) = \varepsilon$ if and only if the SSP instance has a solution~\cite{BellP12}. 

First we prove that $S$ is not finitely free if the SSP instance has a solution. 
As we mentioned above, if the SSP has a solution, then there exists a 
word~$w \in W^+$ that reduces to $\varepsilon$.
This means that every word in $W^+$ 
has an infinite number of factorizations over the set $W$ since we can concatenate 
a sequence of words from $W$ which reduces to an empty word~$\varepsilon$ infinitely 
many times. 

For the opposite direction, we show that 
if $S$ is not finitely free, then the SSP instance has a solution. This implies 
that if the SSP instance has no solution then $S$ must be finitely free. 
Let us suppose that the SSP has no solution. We
consider any finite word~$w$ in $W^+$ and decompose the word into 
subwords of maximal partial cycles as follows: $w = u_1 u_2 \cdots u_n$. 
Now at least one of $r(u_i)$ for $1 \le i \le n$ should have 
infinitely many factorizations over $W$. 
Let $u_i$ be a maximal partial cycle 
such that $r(u_i) = i_1 \cdot w' \cdot \overline{i_2}$, where $i_1,i_2$ are border 
letters and $w'$ is a subword over $\{ a,b, \overline{a}, \overline{b}\}$, which has an infinite number of 
factorizations over $W$. Since $u_i$ is a partial cycle, all of its 
inner border letters should be cancelled. \rfig{fig:structure2} shows 
the structure of the encoding. Since $r(u_i)$ has an infinite number 
of factorizations, we need at least one word~$y$ from $W$ which appears 
infinitely many times in $u_i$. Let us assume that $y$ starts with 
the border letter~$i$, where $0 \le i \le 2k+1$. From the structure 
of our encoding described in~\rfig{fig:structure2}, we see that the 
only way to cancel the border letter $i$ is to obtain a complete cancellation 
from the word $y$ to the word ending with the border letter $\overline{i}$. 
However, we can see that it is impossible to reach a complete cancellation 
if the SSP has no solution since we cannot completely cancel the subwords 
$\overline{a}^x$ and $\overline{b}^x$. Therefore, we prove that $S$ 
is finitely free if there is no solution to the SSP instance.

We have proven that 
the problem of deciding whether there exists a matrix with an infinite 
number of factorizations in the matrix semigroup in \sltwoz 
is NP-hard since the SSP is an NP-hard problem.
\qed\end{proof}

\begin{customthm}{\ref{thm:uniquefp}.}
Given a matrix semigroup~$S$ in \sltwoz generated by the set~$G$ of 
matrices and a particular matrix~$M$ in $S$, the problem of deciding whether 
the matrix~$M$ has more than one factorization over~$G$ is decidable and NP-hard.
\end{customthm}

\begin{proof}
From \rlem{lem:context}, we can represent a set of all unreduced representations 
for $M$ over $\Sigma_{SR} = \{s,r\}$ as a context-free language~$L_M$.

We can also obtain a regular language that corresponds to the 
matrix semigroup~$S$. Let $G = \{M_1, M_2, \ldots, M_n\}$ be the generating set 
of $S$. Namely, $S = \langle M_1, M_2, \ldots, M_n \rangle$.
Let $w_1, w_2, \ldots, w_n \in \Sigma_{SR}^*$ be words encoding the generators, such that 
$\varphi(w_i) = M_i$ for $1 \le i \le n$. Then, we can define a regular language~$L_S$ 
corresponding to $S$ as $L_S = \{ w_1, w_2 ,\ldots , w_n\}^+$.
Then, the intersection of $L_M \cap L_S$ contains all words that correspond to the matrix~$M$ 
in the semigroup~$S$. We note that the cardinality of 
$L_M \cap L_S$ should be one because otherwise we have two different factorizations 
over $G$ generating $M$. Let $w$ be the word in $L_M \cap L_S$. Clearly, $\varphi(w) = M$ 
and $M$ can be generated by the set~$G$. Note that each accepting path of $w$ in $L_S$ 
corresponds to a unique factorization of $M$ over $G$. Now we can decide whether 
or not $M$ has a unique factorization over $G$ by counting the number of accepting paths of words in 
$L_M \cap L_S$ from an NFA accepting $L_S$. 

The NP-hardness can be proven by the reduction from the SSP 
in a similar manner to the proof of \rthm{thm:ffp}.
See \req{eq:ssp} for the word encoding of the SSP instance. 

Let us pick the word~$w = 0 \cdot \varepsilon \cdot \overline{1}$ in $W$ and notice that the matrix $M = f(\alpha(w))$ which 
is encoded from $w$ is in the matrix semigroup $S$.
We will show that the 
matrix~$M$ in $S$ has at least two factorizations over 
the generating set~$\{ f(\alpha(w)) \mid w \in W \}$ of $S$ if and only if the SSP instance 
has a solution. 
We first prove that if the SSP instance has a solution, then $M$ has more than one factorization.
Recall that there exists a word~$w \in W^+$ such that $r(w) = \varepsilon$ if the SSP instance has a solution~\cite{BellP12}.
Therefore, it is not difficult to see that $M$ has more than one factorization since $S$ has an 
identity matrix and $M$ has an infinite number of factorizations.

Now we consider the opposite direction: if $M$ has more than one factorization, then 
there exists a solution to the SSP instance. 
Suppose that the SSP instance has no solution to use contradiction. 

The definition of the set $W$ ensures that the word $r(w) = 0 \cdot \varepsilon \cdot \overline{1}$
can be obtained by taking the word directly or allowing some inner cancellations since 
there is no word in $W^+$ that reduces to the empty word~$\varepsilon$ if the SSP instance 
has no solution. Suppose that there is a word~$w' \in W^+$ that reduces to the same word as $w$. 
Namely, $r(w) = r(w')$. Noe that $w'$ should start with $0 \cdot a^{s_1} \cdot \overline{1}$ 
and the remaining parts should be reduced to $1 \cdot \overline{a}^{s_1} \cdot \overline{1}$. However, 
it is impossible to completely reduce $\overline{b}^x$ since the SSP has no solution and 
therefore, $M$ has only one factorization.

We can conclude that the problem of deciding whether a particular matrix $M$ in the semigroup 
has more than one factorization is NP-hard by the reduction from the SSP.
\qed\end{proof}

\begin{customthm}{\ref{thm:unde}.}
Given a matrix semigroup~$S$ over $\mathbb{Z}^{4\times 4}$ 
generated by the set~$G$ of matrices and a particular matrix~$M$ in $S$, 
the problem of deciding whether the matrix~$M$ has more than one factorization over~$G$ is undecidable.
\end{customthm}
\begin{proof}
We use the {\em fixed element PCP} (FEPCP)~\cite{BellP08} to obtain the undecidability result of 
the matrix unique factorizability over $\mathbb{Z}^{4\times 4}$.
Given an alphabet~$\Gamma = \{a,b, a^{-1}, b^{-1}, \Delta, \Delta^{-1}, \star\}$, where 
$\Gamma \setminus \{ \star \}$ forms a free group not containing `$\star$', and a finite 
set of pairs of words over $\Gamma$,
\[
P = \{(u_1, v_1), (u_2, v_2), \ldots, (u_n, v_n) \} \subset \Gamma^* \times \Gamma^*.
\]

The FEPCP asks whether or not there exists a finite sequence of indices~$s = 
(s_1, s_2, \ldots, s_k)$ such that $u_{s_1} u_{s_2} \cdots u_{s_k} = 
v_{s_1} v_{s_2} \cdots v_{s_k} = \star$.

Let $\Sigma = \{a,b\}$ be a binary alphabet.
We use the homomorphism $f : (\Sigma \cup \overline{\Sigma})^* \to \mathbb{Z}^{2\times 2}$ 
defined in \rlem{lem:matrixencoding}.

Let $\Gamma = \{a,\overline{a}, b, \overline{b}, \Delta, \overline{\Delta}, \star \}$ and 
define a mapping $\gamma$, to encode $\Gamma$ using elements of $\varphi$, 
where $\gamma: \Gamma^* \to \mathbb{Z}^{2\times 2}$ is given by:
\begin{align*}
\gamma(\star) = f(a),&&\gamma(a) = f(bab), &&\gamma(b) = f(b^2 a b^2), &&
\gamma(\Delta) = f(b^3 a b^3),\\
&& \gamma(\overline{a}) = f(\overline{b}\overline{a}\overline{b}), &&\gamma(\overline{b}) = f(\overline{b^2} \overline{a} 
\overline{b^2}),&& \gamma(\overline{\Delta}) = f(\overline{b^3} \overline{a} \overline{b^3}).
\end{align*}

Given an instance of FEPCP $P = \{ (u_i, v_i) \mid 1 \le i \le n\},$ for each $1 \le i \le n$, 
we define the following matrices:
\[
A_i = \begin{pmatrix} \gamma(u_i) & 0 \\ 0 & \gamma(v_i) \end{pmatrix}.
\]

Here we remark that $(\varepsilon, \varepsilon) \notin \langle P \rangle$ always holds by 
the reduction process from PCP to FEPCP~\cite{BellP08}.
Then, it is easy to see that if the matrix
\[
B = \begin{pmatrix} \gamma(\star) & 0 \\ 0 & \gamma(\star) \end{pmatrix}
\]
exists in the semigroup~$S$ generated by $\{ A_1, A_2, \ldots, A_n \} \subseteq \mathbb{Z}^{4 \times 4}$,
then the FEPCP instance has a solution. Now we consider a matrix semigroup~$S'$ obtained 
from $S$ by inserting the matrix~$B$ to the generating set. In other words, the semigroup~$S$ 
is generated by the set~$G = \{ A_1, A_2, \ldots, A_n,B \} \subseteq \mathbb{Z}^{4 \times 4}$.
Then, 
we can decide whether the FEPCP instance has a solution by deciding the matrix~$B$ 
has a unique factorization over $G$. Since the existence of a solution to the FEPCP 
instance is undecidable, the matrix unique factorizability over $\mathbb{Z}^{4 \times 4}$ 
is also undecidable.
\qed\end{proof}

\begin{customprop}{\ref{prop:recurrent}.}
Let $S$ be a matrix semigroup generated by the generating set~$G$ and 
$M$ be a matrix in $S$. Then, the matrix $M$ has an infinite number of factorizations over $G$ 
if the identity matrix exists in $S$. However, the opposite does not hold in general.
\end{customprop}

\begin{proof}
Define an alphabet
$\Sigma = \{0,1,   \overline{1}, a, \overline{a}\}.$
We define a set~$W$ of words which are indeed encoding a set of matrices 
generating a matrix semigroup based on two monomorphisms $\alpha$ and $f$ as follows:
\[
W = \{ 0 \cdot a \cdot \overline{0},\;\;0 \cdot \overline{a} \cdot \overline{1}, \;\; 1 \cdot \overline{a} \cdot 
\overline{1} \} \subseteq \Sigma^*.
\]
Here, the matrix semigroup~$S$ is defined as
$ S = \{ f(\alpha(w)) \mid w \in W^+ \}.$

It is not difficult to see that there is no identity matrix in $S$ by the property of 
encodings we used. However, there exists a matrix with an infinite number of 
factorizations over the generating set~$\{ f(\alpha(w)) \mid w \in W\}$. For 
instance, the word~$0 \cdot \varepsilon \cdot \overline{1}$ can be obtained 
infinitely many times by concatenating words in $W$ as we have the 
following equation for every integer $n$:
\[
0 \cdot \varepsilon \cdot \overline{1} = (0 \cdot a \cdot \overline{0})^n (0 \cdot \overline{a} \cdot 
\overline{1}) (1 \cdot \overline{a} \cdot \overline{1})^{n-1}.
\]

As we have shown that the matrix semigroup $S$ without the identity contains 
a matrix with an infinite number of factorizations, we complete the proof.
\qed
\end{proof}

\begin{customthm}{\ref{thm:recurrent}.}
The recurrent matrix problem in \sltwoz is decidable and in fact, NP-hard.
\end{customthm}

\begin{proof}
From the proof of \rthm{thm:uniquefp}, we can see that if $L_M \cap L_S$ is infinite, 
then $M$ has an infinite number of factorizations over the generating set~$G$. Since 
the finiteness of the context-free language is decidable, the recurrent matrix problem 
in \sltwoz is also decidable.

For the NP-hardness of the recurrent matrix problem in \sltwoz, we can directly 
apply the NP-hardness proof of \rthm{thm:uniquefp}. It is not difficult to see that 
the matrix~$f(\alpha(w))$ has an infinite number of factorizations if and only if 
the encoded SSP instance has a solution.
\qed\end{proof}

\begin{customlem}{\ref{lem:kfreeness}.}
Given a matrix semigroup~$S$ in \sltwoz generated by the set~$G$ of 
matrices, a particular matrix~$M \in S$, and a positive integer $k \in \mathbb{N}$, the problem of deciding whether 
the matrix~$M$ has more than $k$ factorizations over~$G$ is decidable and NP-hard.
\end{customlem}

\begin{proof}
From the proof of \rthm{thm:uniquefp}, we can easily show that the problem of 
deciding whether or not $M$ is $k$-factorizable over $G$ is decidable.

Recall that the intersection of $L_M \cap L_S$ contains all words that correspond to the matrix~$M$ 
in the semigroup~$S$. We note that the cardinality of 
$L_M \cap L_S$ is finite if and only if $M$ is finitely factorizable. Now we can decide whether 
or not $M$ is $k$-factorizable by counting the number of accepting paths of words in 
$L_M \cap L_S$ from an NFA accepting $L_S$. Note that counting the number of 
accepting paths on a word in an NFA can be done in time polynomial in the length of the word.
\qed\end{proof}
\begin{customthm}{\ref{thm:kfreenesshard}.}
Given a matrix semigroup~$S$ in \sltwoz generated by the set~$G$ of matrices
and a positive integer $k \in \mathbb{N}$, 
the problem of deciding whether or not 
every prime matrix in $S$ has at most $k$ factorizations is PSPACE-hard.
\end{customthm}
\begin{proof}
For the PSPACE-hardness of the problem, we reduce the DFA intersection emptiness 
problem~\cite{Kozen77} 
to the $k$-freeness problem. Note that given $k$ DFAs, 
the DFA intersection emptiness problem asks whether the intersection of $k$ DFAs is 
empty.

Let us suppose that we are given $k+1$ DFAs from $A_1$ to $A_{k+1}$ as follows 
and asked whether the intersection is empty.
Let $A_i = (Q_i, \Sigma_A, \delta_i, q_{i,0}, F_i)$ be the $i$th DFA of $n$ states, where 
\begin{itemize}
\item $Q_i = \{q_{i,0},q_{i,1},\ldots, q_{i,n-1}\}$ is a finite set of states,
\item $\Sigma_A$ is an alphabet,
\item $\delta_i$ is the transition function, 
\item $q_{i,0} \in Q_i$ is the initial state, and
\item $F_i  \subseteq Q_i$ is a finite set of final state.
\end{itemize}

First we define an alphabet~$\Sigma_i$ for encoding the states of $A_i$ 
as follows:
\[
\Sigma_i = \{0,1,\ldots, n-1, \overline{0}, \overline{1}, \ldots, \overline{n-1}\}.
\]

Note that the number $k, 0 \le k \le n-1$ in $\Sigma_i$ encodes the state~$q_{i,k} 
\in Q_i$. We also define alphabets from $\Sigma_1$ to $\Sigma_{k+1}$ for
all DFAs from $A_1$ to $A_{k+1}$ analogously. Note that any pair of 
$\Sigma_i$ and $\Sigma_j$ are disjoint unless $i= j$.

Now we define an alphabet 
\[
\Sigma = \bigcup_{i=1}^{k+1} \Sigma_i \cup \{ \# \} \cup \Sigma_A
\]
 and a set~$W \subseteq \Sigma^*$ of 
words which encodes the instance of the DFA 
intersection problem as follows.
For each DFA $A_i$, we add the following words to the set~$W$:
\begin{itemize}
\item $l \cdot a \cdot \overline{m}$ for 
each transition~$q_{i,m} \in \delta(q_{i,l} , a)$, 
\item $\# \cdot \varepsilon \cdot \overline{0}$ for the initial 
state~$q_{i,0}$, and
\item $j \cdot \varepsilon \cdot \#$ for each final state~$q_{i,j} \in 
F_i$.
\end{itemize}

We can see that 
$\# \cdot w \cdot \# \in W^+$ if and only if $w \in L(A_i)$.
We add words corresponding to transitions of all DFAs from $A_1$ to $A_{k+1}$ 
analogously such that $\# \cdot w \cdot \# \in W^+$ if and only if $w \in \bigcup_{i=1}^{k+1} 
\in L(A_i)$. In other words, the set $W^+$ has a word of form $\# \cdot w \cdot \#$ 
which has a word $w$ in between $\#$ symbols if and only if any DFA from $A_1$ 
to $A_{k+1}$ has an accepting computation on the word $w$.

Let $S_W$ be the matrix semigroup generated by the set~$\{ f(\alpha(w)) \mid w \in W\}$.
We first prove that if $S_W$ is $k$-free, then the intersection of $k+1$ DFAs is empty. 
Assume that the intersection is not empty to use contradiction. This 
implies that there is a word~$w$ in $\bigcap_{i=1}^{k+1} L(A_i)$.
As we mentioned, $\# \cdot w \cdot \# \in W^+$ and the corresponding 
matrix~$M$ has $k+1$ different factorizations, which is more than $k$, over 
the generating set~$\{ f(\alpha(w)) \mid w \in W\}$. Now we reach a 
contradiction since $S_W$ is not $k$-free.

Now we prove that if the intersection of $k+1$ DFAs is empty, then the matrix semigroup 
$S_W$ is $k$-free. Assume that $S_W$ is 
not $k$-free. This implies that there exists a prime matrix~$M \in S_W$ which has more 
than $k$ different factorizations. Since $M \in S_W$, 
we have a corresponding word~$w_M \in W^+$ such that $f(\alpha(w_M)) = M$. 
We decompose~$w_M$ into subwords~$w = u_1 u_2 \cdots u_m$ such that each $u_l 
\in W^+, 1 \le l \le m$ is a partial cycle of maximal size. Note that a partial cycle~$x \cdot u \cdot \overline{y}$ 
implies that there is a path from $q_{i,x}$ to $q_{i,y}$ spelling out a word~$u$ in one of $k+1$ DFAs, say $A_i$. 
Since $M$ is prime, there can be only one partial cycle whose corresponding matrix is not $k$-factorizable.
Let $u_l = x \cdot u \cdot \overline{y} , 1\le l \le m$ be the partial cycle such that $f(\alpha(u_l))$ is not $k$-factorizable. 
If $x \ne \#$, then there are more than $k$ paths from $q_{i,x}$ to $q_{i,y}$ spelling out 
the word~$u$ in $A_i$. However, it is impossible for $A_i$ to have multiple paths labeled by the 
same word since $A_i$ is a DFA. If $x = \#$, then the only way to having more than $k$ paths 
labeled by the same word~$u$ is that all $k+1$ DFAs accept the word~$u$.

As a final note, we mention that the whole reduction process can be computed in polynomial time. 
Therefore, we prove that the $k$-freeness problem for matrix semigroups in \sltwoz is PSPACE-hard.
\qed\end{proof}

\end{document}